\documentclass[10pt,twocolumn,letterpaper]{article}

\usepackage{cvpr}              
\definecolor{cvprblue}{rgb}{0.21,0.49,0.74}
\usepackage[accsupp]{axessibility}  
\usepackage[pagebackref,breaklinks,colorlinks,allcolors=cvprblue]{hyperref}
\usepackage{xcolor}
\usepackage{multirow}
\usepackage{amsthm}
\usepackage[table]{xcolor}
\usepackage[inkscapelatex=false]{svg}

\newtheorem{theorem}{Theorem}

\newcommand{\name}{\textrm{mmDiff}\xspace}

\def\paperID{} 
\def\confName{CVPR}
\def\confYear{2026}

\title{\name: A Noise-Robust Differentiable Ray-Tracing Framework \\for mmWave Scene Calibration and Channel Prediction}

\author{Haofan Lu\\
UCLA\\
{\tt\small haofan@cs.ucla.edu}
\and
Yadi Cao\\
UC San Diego\\
{\tt\small yac066@ucsd.edu}
\and
Wanghao Yi\\
UCLA\\
{\tt\small wanghao24@ucla.edu}
\and
Omid Abari\\
UCLA\\
{\tt\small omid@cs.ucla.edu}
}

\begin{document}
\maketitle
\begin{abstract}
3D reconstruction techniques such as LiDAR scanning and photogrammetry have made it practical to build detailed geometric models of real-world environments. Such reconstructed models can potentially serve as the foundation for wireless digital twins and support network planning and optimization. The core challenge is that reconstructed models inevitably contain geometric artifacts such as holes and noisy surfaces, and wireless simulation is highly sensitive to such noise. To solve this problem, we propose a differentiable directional scattering model to approximate the noise-sensitive specular reflection. This approximation smoothly distributes reflected power among nearby ray directions, making the simulator inherently robust to local geometric artifacts in the reconstructed model. We prove mathematically that this approximation preserves asymptotic path-gain accuracy. Building on this idea, we propose mmDiff, an end-to-end differentiable framework for calibrating material properties from sparse mmWave measurements and predicting mmWave channels. We evaluate mmDiff on both real-world and synthetic datasets, and demonstrate its superior performance over prior methods using pure specular reflection in noisy reconstructed geometry. We open-source our implementation at \url{https://github.com/LuHaofan/mmDiff}.
\end{abstract}

\vspace{-5pt}
\section{Introduction}
\label{sec:intro}
The rapid proliferation of consumer electronics, including smartphones, mixed reality headsets, and autonomous vehicles, has driven increasing demands for high-throughput, low-latency wireless connectivity. Millimeter wave (mmWave) technology~\cite{rappaportOverviewMillimeterWave2017}, operating across the 30–300 GHz frequency bands, offers extensive bandwidth that enables multi-gigabit data rates and sub-millisecond latency.

However, mmWave signals attenuate rapidly with distance and are easily blocked by everyday objects, limiting reliable communication range. To compensate, mmWave systems concentrate energy into narrow directional beams. This directionality demands precise alignment between the transmitter and receiver. In practice, the optimal beam directions are highly dependent on the physical environment: the layout of the space and the material composition. Conventional beam alignment algorithms assume no prior knowledge of the environment and must scan every possible direction, incurring significant overhead. Although many fast beam alignment methods have been proposed~\cite{agilelink,li2012efficient,tsang2011coding,yuan2015efficient,spacebeam}, they rarely exploit environmental priors to accelerate beam training.

Recent advances in 3D computer vision and neural representation have inspired researchers to model wireless radiation fields in the environment and use them for predicting wireless channels. There are two main lines of research:

Data-driven methods, such as \cite{NeRF2,GSRF,WRF-GS,WRF-GS+,RadSplatter,shenNeRFAPTNewNeRF2025}, adapt NeRFs and 3D Gaussian Splatting to the radio frequency domain, implicitly encoding the radiation field in neural network weights or Gaussian parameters. Without prior knowledge of the scene geometry, these methods require dense RF measurements that are hard to obtain in practice. Recent works~\cite{chen2024rfcanvas, zhangRF3DGSWirelessChannel2025} incorporate visual priors from cameras and LiDARs to relax this requirement, but the resulting representations remain implicit, coupling antenna pattern, polarization, path loss, and material properties in ways that limit explainability and adaptability to new transmitter/receiver configurations. 

Simulation-driven methods~\cite{wi-nert, diff-rt-calibration, jiaNeuralReflectanceFields2025, RadioTwin}, by contrast, decompose the scene into interpretable components: geometry, material properties, and transmitter/receiver configurations, and employ ray tracing to simulate radio propagation. This decomposition enables network operators to test diverse hypothetical configurations, making it well-suited for deployment planning. We follow the simulation-driven approach.

To bridge the gap between simulation and reality, simulation-driven methods require calibrating scene parameters, particularly the radio material properties. Early works manually assign material parameters and tune them using closed-form or simulated annealing optimization~\cite{kanhereCalibrationNYURay3D2023, charbonnierCalibrationRayTracingDiffuse2020a, priebeCalibratedBroadbandRay2012a}, which does not scale to large, complex scenes. The emerging paradigm is to make the ray tracing pipeline differentiable to enable gradient-based optimization~\cite{sionna}, which is more efficient and scalable with the acceleration of GPUs. This approach has been applied and verified at the sub-6GHz band~\cite{diff-rt-calibration}. However, mmWave systems differ significantly from sub-6GHz, where large antenna arrays are employed and signal scattering behavior is vastly different. Existing differentiable calibration frameworks are not designed for these characteristics, leaving differentiable calibration for mmWave an open problem.

In pursuing this direction, we encountered a key practical challenge: geometry noise in 3D reconstructed models. Prior differentiable calibration works assume a high-quality CAD model that faithfully replicates the real-world scene, but such models are rarely available in practice. While advances in 3D reconstruction technology have made scene modeling more accessible, the resulting models often exhibit flaws such as holes and noisy surfaces. These flaws cause critical propagation paths to be dropped or spurious paths to appear, degrading calibration performance. Manually fixing such flaws requires professional artists and days of effort, which is not scalable. We therefore ask: \textit{Can we design the simulator itself to be robust to geometry noise?} Our key insight is that this is particularly tractable at mmWave frequencies. At sub-6GHz, specular reflection dominates, and geometry noise is difficult to handle -- a small surface normal error directly translates to a large error in the reflected ray direction. At mmWave, however, scattering is stronger, and this opens an opportunity: we approximate the noise-sensitive specular reflection with a differentiable directional scattering model that smoothly distributes power among nearby ray directions, naturally absorbing the effect of geometry noise.

In this work, we present \name, the first end-to-end differentiable ray-tracing framework for mmWave channel calibration, enabling gradient-based optimization of radio material properties at scale. Our main contributions are: 

\begin{itemize}
\item We propose an end-to-end optimization framework for mmWave calibration using sparse mmWave channel measurements, bridging the gap of differentiable mmWave calibration. 
\item We introduce a differentiable approximation of the specular reflection. We prove that the approximation preserves the asymptotic accuracy of path-gain estimation.
\item We evaluate \name on real-world datasets, and observe 10.5 dB improvement in average received power prediction accuracy over the prior differentiable solver using specular reflection.
\end{itemize}

\vspace{-5pt}
\section{Preliminaries}
\label{sec:preliminaries}
\subsection{Angular Power Spectra}

\begin{figure}[t]
    \centering
    \includegraphics[width=0.7\columnwidth]{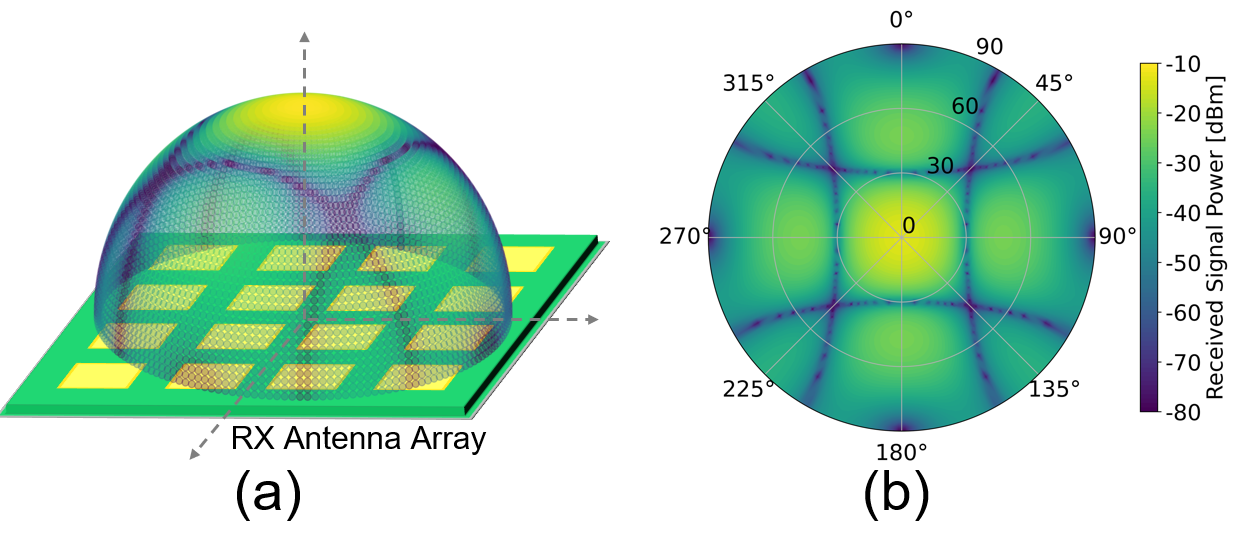}
    \caption{AoA power spectra. a) The 3D view of an AoA power spectra, overlying a 4x4 antenna array. b) The 2D polar plot of the AoA power spectra. The color indicates the received signal power from each direction.}
    \label{fig:aoa-spectrum}
    \vspace{-10pt}
\end{figure}
mmWave radios create highly directional beams to concentrate energy toward intended directions and overcome high path loss. Understanding how signal power arrives from different directions is therefore essential for characterizing channel behavior, optimizing coverage, and designing robust beam alignment strategies.

The angle-of-arrival (AoA) power spectrum provides this directional view by describing how much received signal power arrives from each direction in space, revealing the spatial structure of the wireless channel. Crucially, the AoA power spectrum is practically measurable: it can be obtained with a phased array or a mechanically rotated horn antenna that scans across directions, making it well-suited for mmWave scene calibration. 

\subsection{Physically-based rendering (PBR)}
Physically-based rendering (PBR) is a computer graphics technique that attempts to simulate how light behaves in the real world for rendering a view. It simulates light rays propagating and bouncing in the scene to produce realistic images. Recent efforts make the whole ray-tracing pipeline end-to-end differentiable, enabling the applications of optimizing scene parameters from images, known as inverse rendering~\cite{jakob2022mitsuba3}. This capability has been widely used in various tasks, including 3D reconstruction~\cite{Nicolet2021Large}, material or illumination estimation~\cite{chen2021dib}. Recently, this concept has been extended to the wireless domain, represented by NVIDIA's Sionna \cite{sionna}, a differentiable ray-tracing engine for wireless communication. Sionna builds the wireless PHY and MAC layers on top of the differentiable rendering engines~\cite{Jakob2022DrJit}, providing GPU-accelerated simulation for wireless coverage estimation. The latest version of Sionna implements two solvers for the radio map and channel impulse response (CIR), respectively. In this work, we propose the design of a new physics-grounded solver for the mmWave channel simulation. Enabling end-to-end calibration from the angular spectrum.

\begin{figure*}[t]
    \centering
    \includegraphics[width=0.9\textwidth]{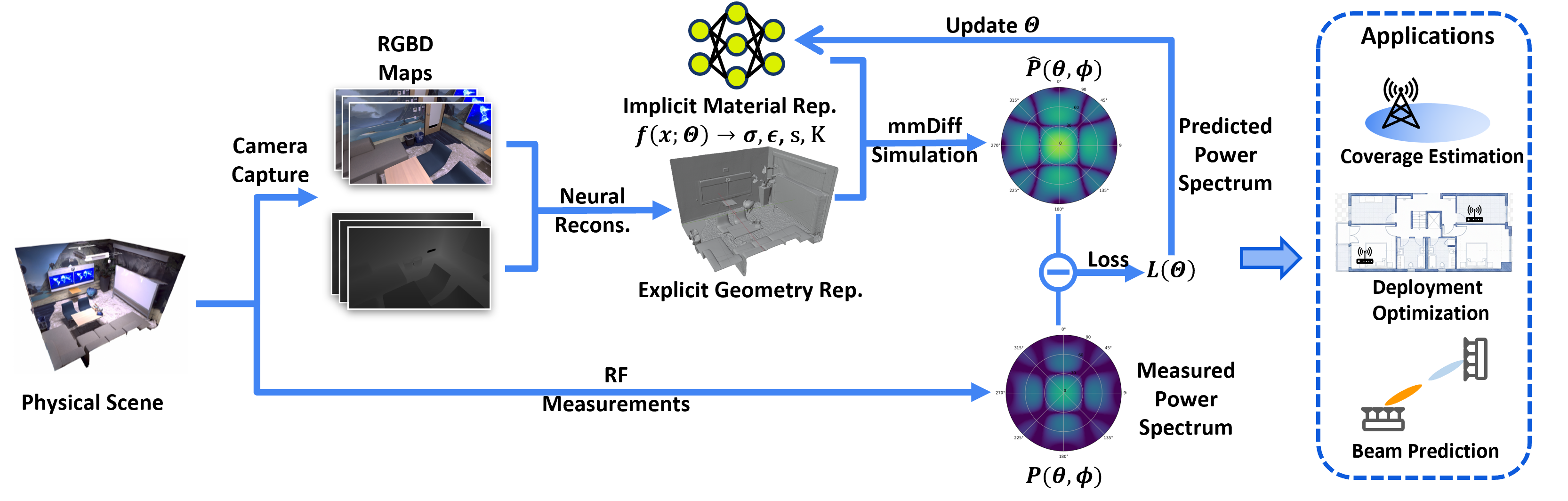}
    \caption{\name's learning framework for scene calibration. The inputs are the initial scene configuration and the AoA power spectrum measurements. The outputs are the calibrated scene configuration and the predicted AoA power spectrum.}
    \label{fig:learning_framework}
    \vspace{-15pt}
\end{figure*}

\subsection{Physics of Wireless Ray-Tracing}
Wireless ray tracing applies the ray approximation of electromagnetic (EM) propagation to predict how radio waves propagate through complex environments, incorporating wireless-specific physics such as antenna radiation patterns, array beamforming, polarization, phase evolution, and frequency-dependent material responses. Its goal is to compute the electromagnetic field received at a specific RX by modeling how rays depart the transmitter, interact with surfaces via reflection, diffraction, and scattering, and arrive at each receiver antenna element with the correct amplitude, phase, and polarization.

Following Sionna~\cite{sionna}, we adopt a physically grounded formulation for modeling these components. In particular, the total field arriving at the receiver depends on four factors: a) the radiation patterns of antennas; b) the beamforming weights of antenna arrays; c) the path gain of free-space propagation; and d) the change of signal amplitude, phase, and polarization due to the ray-surface interactions. We provide a brief formulation for some of these components below:

The radiation pattern of antennas is modeled as complex spherical functions:
\begin{equation}
    \label{eq:antenna-pattern}
    \mathbf{C}(\theta, \varphi) = C_\theta(\theta, \varphi) \hat{\boldsymbol{\theta}}+C_{\varphi}(\theta, \varphi) \hat{\boldsymbol{\varphi}},
\end{equation}

where $\hat{\boldsymbol{\theta}}$ and $\hat{\boldsymbol{\varphi}}$ are unit basis vectors of the spherical coordinate system; $C_\theta(\theta, \varphi)$ and $C_{\varphi}(\theta, \varphi)$ are complex values, modeling the elevation and azimuth pattern of the antenna, respectively. The pattern can be any physically valid spherical function satisfying energy conservation~\cite{balanis2016antenna}.

Under the far-field plane-wave assumption~\cite{sionna}, an antenna array can be treated as a single effective antenna with an aggregated radiation pattern. Rays are emitted from the array center, and beamforming gain is applied through the synthetic array weight:

\begin{equation}
    \label{eq:array_weights}
    \mathbf{W_{TX}}\left(\mathbf{k}\right)=\sum_{n=1}^N w_n \cdot e^{j \frac{2 \pi}{\lambda} \mathbf{d}_n \cdot \mathbf{k}},
\end{equation}

where $N$ is the number of TX antenna elements; $\mathbf{k}$ is the direction of departure vector; $w_n$ is the precoding weight of the $n$-th antenna element to steer the TX beam direction; $\mathbf{d}_n$ is the relative position of the $n$-th antenna element with respect to the center of the array; $\lambda$ is the signal wavelength.

The free-space propagation and field transformation due to ray-surface interaction can be generally represented by a transformation matrix $\mathbf{T}$ that collectively accounts for the change of amplitude, phase, and polarization during propagation. Hence, the E-field of a single ray $i$, reaching an RX antenna element, can be expressed as:

\begin{equation}
    \label{eq:rx_field_single_ray}
    \mathbf{E_i} = \frac{\lambda}{4\pi}\,
        \mathbf{C}_{\mathrm{T}}
        \,\mathbf{W}_{\mathrm{TX}}(\mathbf{k}_i^{\mathrm{AoD}})
        \,\mathbf{T}_{i}
        \mathbf{C}_{\mathrm{R}}^{\mathrm{H}},
\end{equation}

where $\frac{\lambda}{4\pi}$ is a scaling factor account for the antenna aperture; $\mathbf{C_T}$ and $\mathbf{C_R}$ are short for $\mathbf{C}_{\mathrm{T}}(\theta_{\mathrm{T}},\varphi_{\mathrm{T}})$ and $\mathbf{C}_{\mathrm{R}}(\theta_{\mathrm{R}},\varphi_{\mathrm{R}})$ modeling the radiation patterns of TX and RX antennas. The final wireless channel is the accumulation of E-fields along all paths to the receiver antenna.

\vspace{-5pt}
\section{\name Design}
\label{sec:design}
We consider a practical application scenario, shown in~\cref{fig:learning_framework} where a network operator surveys a deployment environment using an RGB-D camera to scan the scene and 3D reconstruction techniques to build a mesh model. Simultaneously, the operator measures received signal power at different angles and beam configurations using an antenna array to calibrate material parameters. These parameters are represented implicitly by a neural network that maps 3D ray-surface intersection coordinates to material properties such as conductivity and permittivity. Predicted AoA power spectra are compared with physical measurements to compute the loss. Since \name's ray-tracing pipeline is fully differentiable, the gradients of the loss can be back-propagated to update neural network parameters. This process is repeated until convergence, yielding a calibrated scene representation that closely matches the real environment. The calibrated model can then predict channels for arbitrary TX-RX locations and antenna patterns without retraining, as predictions are computed dynamically.

Next, we detail \name's ray-tracing algorithm and noise-robust ray-surface interaction model, the core components enabling accurate and differentiable channel prediction.

\subsection{mmDiff's Ray-Tracing Algorithm}
\label{sec:ray-tracing}
We introduce a pure Monte Carlo-based ray-tracing algorithm to compute the AoA power spectrum. It synthesizes received signal power at each RX antenna element by integrating contributions from many rays reaching the antenna. The algorithm comprises two tightly integrated components: path tracing, which finds signal propagation paths connecting transmitter and receiver, and power integration, which integrates EM power to produce the AoA power spectrum. We discuss each component in turn.

\noindent\textbf{Path Tracing} We adopt a backward ray tracing approach, where rays are launched from RX to TX to ensure signal power is captured from all receiving directions. For every ray emitting from the RX, we find its intersections with the environment mesh. For each ray emitted from the RX, we find its intersections with the environment mesh. At each intersection, we (i) perform a visibility test to the TX and, when unobstructed, spawn a deterministic ray toward the TX; and (ii) sample a next-hop direction according to the event type (reflection, diffusion, or refraction) and the surface scattering pattern. For highly directional scattering patterns, we use importance sampling to concentrate samples in high-gain directions, reducing simulation variance. We trace rays recursively until a preset maximum depth.

\noindent\textbf{Power Integration} We integrate the received signal power to each receiver antenna element. We apply an additional phase shift based on the receiving direction of the ray, $\mathbf{k}$, and relative position of the antenna element in the array, $\mathbf{d}_m$, where $m$ is the index of the antenna element: 

\begin{equation}
    \label{eq:array_weights_rx}
    \mathbf{W^m_{RX}}\left(\mathbf{k}\right)= e^{-j \frac{2 \pi}{\lambda} \mathbf{d}_m \cdot \mathbf{k}}.
\end{equation}

The Monte-Carlo integration equation for the received signal power at RX antenna element $m$ is:
\begin{equation}
\label{eq:mc_integration}
\begin{aligned}
P_m
  &= \int_{\Omega}
      \left|\frac{\lambda}{4\pi}\,
        \mathbf{C}_{\mathrm{T}}
        \,\mathbf{W}_{\mathrm{TX}}(\mathbf{k}^{\mathrm{AoD}})
        \,\mathbf{T}
        \,\mathbf{C}_{\mathrm{R}}
      \right|^{2}
       \,\mathbf{W}^{\,m}_{\mathrm{RX}}(\mathbf{k}^{\mathrm{AoA}})
      \,d\Omega \\[4pt]
  &\approx
      \sum_{i=1}^{L}\!
      \left|
        \frac{\lambda}{4\pi}\,
        \mathbf{C}_{\mathrm{T}}
        \,\mathbf{W}_{\mathrm{TX}}(\mathbf{k}_i^{\mathrm{AoD}})
        \,\mathbf{T}_{i}
        \mathbf{C}_{\mathrm{R}}^{\mathrm{H}}
        \right|^{2}
        \,\mathbf{W}^{\,m}_{\mathrm{RX}}(\mathbf{k}_i^{\mathrm{AoA}}),
\end{aligned}
\end{equation}

where $\Omega$ is the set of sampled angles; $L$ is the number of ray directions we sampled uniformly on the receiver's sphere.

Once we have the received signal power at each RX antenna element, we can convert it to the AoA power spectrum by weighted summation with the steering vector of the RX antenna array:

\begin{equation}
\label{eq:aoa_power_spectrum}
P(\theta, \varphi) = \frac{1}{\sqrt{M}}\sum_{m=1}^{M} P_m \cdot  e^{-j \frac{2 \pi}{\lambda} \mathbf{d}_m \cdot \mathbf{k}(\theta, \varphi)},
\end{equation}

where $M$ is the number of RX antenna elements. We compute $P(\theta, \varphi)$ for a mesh grid of angles $\theta$ and $\varphi$ covering the upper hemisphere to generate the AoA power spectrum shown in \cref{fig:aoa-spectrum}. 

\subsection{Noise-Robust Ray-Surface Interaction}
\label{sec:directional_scattering}

\begin{figure}[t]
    \centering
    \includegraphics[width=0.9\linewidth]{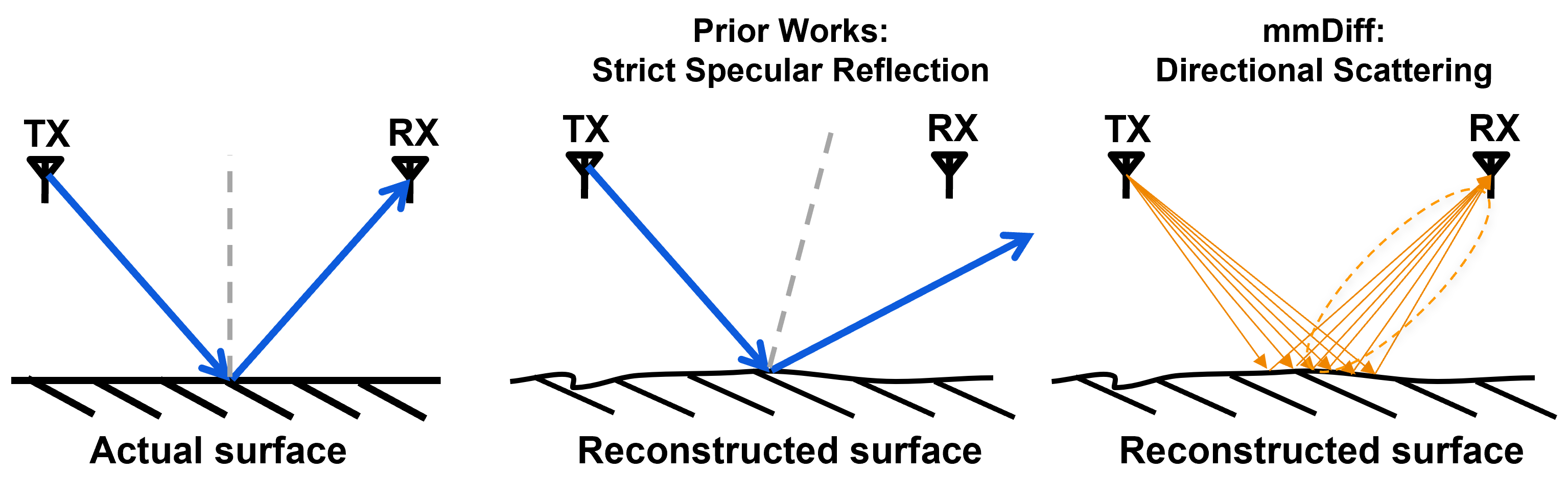}
    \caption{Issue with pure specular reflection: sensitivity to reconstruction noise, i.e., small discrepancies in the reconstruction model would cause significant differences in the simulated paths. 
    }
    \label{fig:spec_refl_issue}
    \vspace{-10pt}
\end{figure}

So far, we have explained \name's ray-tracing algorithm for generating the AoA power spectrum. Next, we discuss how we make the ray-tracing pipeline noise-robust and fully differentiable via a novel ray-surface interaction model. 

Four major types of interaction events are considered in common wireless simulators~\cite{sionna}: specular reflection, diffuse scattering, refraction, and diffraction. Diffraction is often ignored at mmWave frequencies due to its negligible contribution to overall path gain~\cite{carneiro2022study}. Specular reflection is the dominant propagation mechanism, contributing over 90\% of received power. However, specular reflection follows the ideal law of reflection, meaning incident and reflection angles are identical, so the outgoing direction is represented by a delta distribution. This makes ray tracing highly sensitive to geometric noise and non-differentiable with respect to surface normals. Even small perturbations in surface orientation can cause large deviations in reflection direction, leading to significant path discrepancies.

The sensitivity arises because the reflection lobe width depends on the ratio between the signal wavelength and the surface roughness~\cite{ticconi2011models}. For lower-frequency radios (sub-6 GHz), the wavelength is much larger than typical material roughness, so most surfaces appear mirror-like, with energy concentrated in the specular direction. At mmWave frequencies, however, the wavelength is comparable to the roughness of common building materials such as concrete, marble, and brick. Recent measurements confirm that reflections at mmWave exhibit narrow but nonzero angular spread rather than an ideal delta peak~\cite{guoMeasurementAnalysisScattering2025}. This motivates our directional scattering approximation, which replaces ideal specular reflection with a narrow-lobed scattering distribution. The resulting pattern preserves physical realism while maintaining differentiability of the ray-tracing pipeline with respect to geometry.

We adopt the directional scattering model from~\cite{degli2007measurement}, though any narrow-lobed formulation satisfying energy conservation applies. The scattering pattern is defined as:

\begin{equation}
    \label{eq:directional_scattering}
    f(\psi) \;=\;F_{\alpha_r}(\theta_i)^{-1}(\frac{1+\cos\psi}{2})^{\alpha_r},
\end{equation}

where $\psi$ is the deviation angle from the specular direction; $F_{\alpha_r}(\theta_i)^{-1}$ is a normalization factor that enforces energy conservation, depending on incident angle $\theta_i$; and $\alpha_r$ controls the lobe beamwidth, with larger values producing narrower, more directional patterns. Rather than relying on a single reflection point, directional scattering aggregates contributions from many nearby points to form a "heat spot", averaging reflected power and angles and making the model robust to geometric noise. Our implementation uses 64-bit arithmetic to support ultra-narrow beam computation. The half-power beamwidth can be as narrow as $1.40^\circ$.
\begin{figure}[t]                  
  \centering
  \begin{minipage}[t]{0.4\textwidth}
    \centering
    \includegraphics[width=\linewidth]{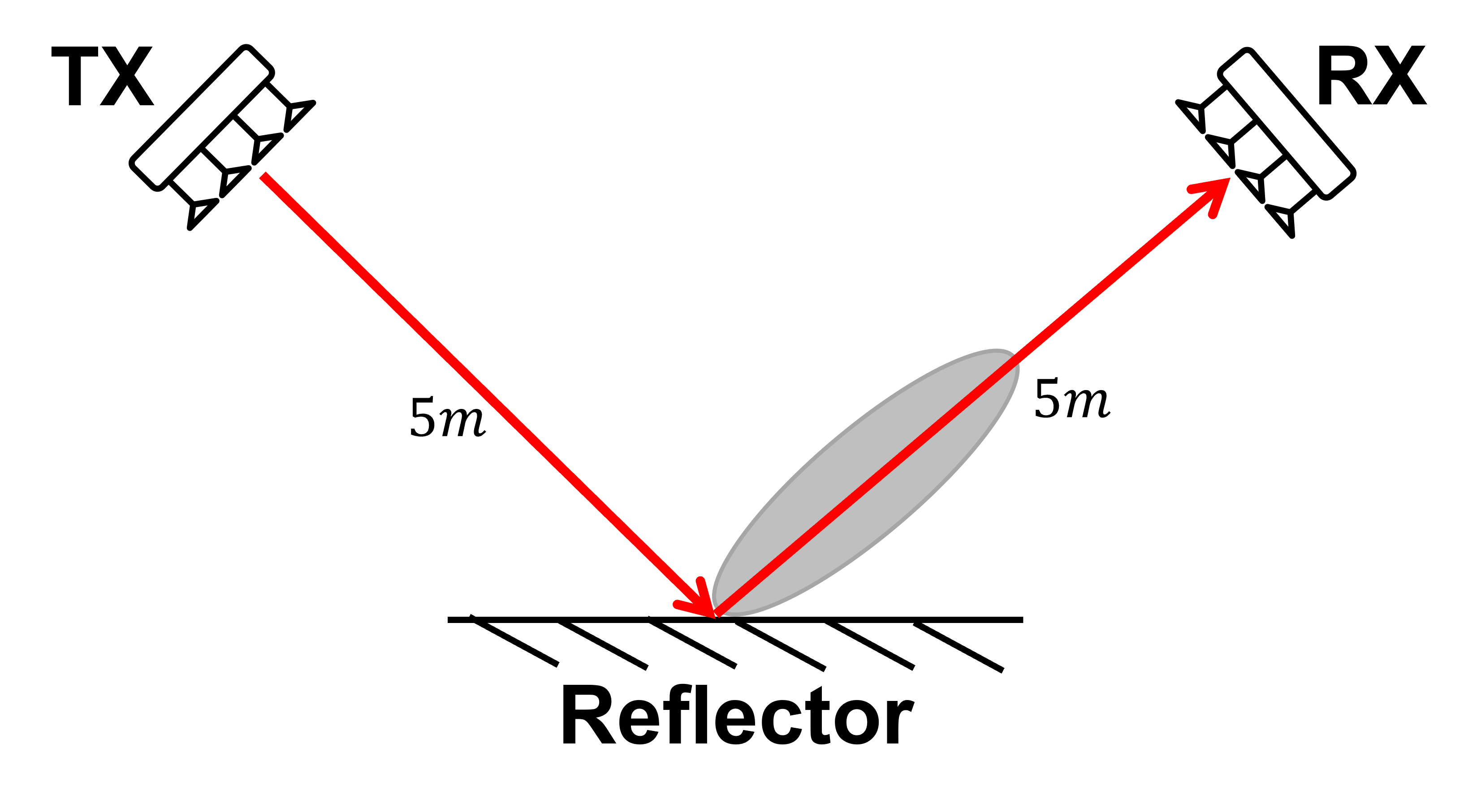}
    \captionof{figure}{Single-reflector scene for the directional scattering path gain convergence test.}
    \label{fig:single_reflector_scene}
  \end{minipage}
  \begin{minipage}[t]{0.4\textwidth}
    \centering
    \includegraphics[width=\linewidth]{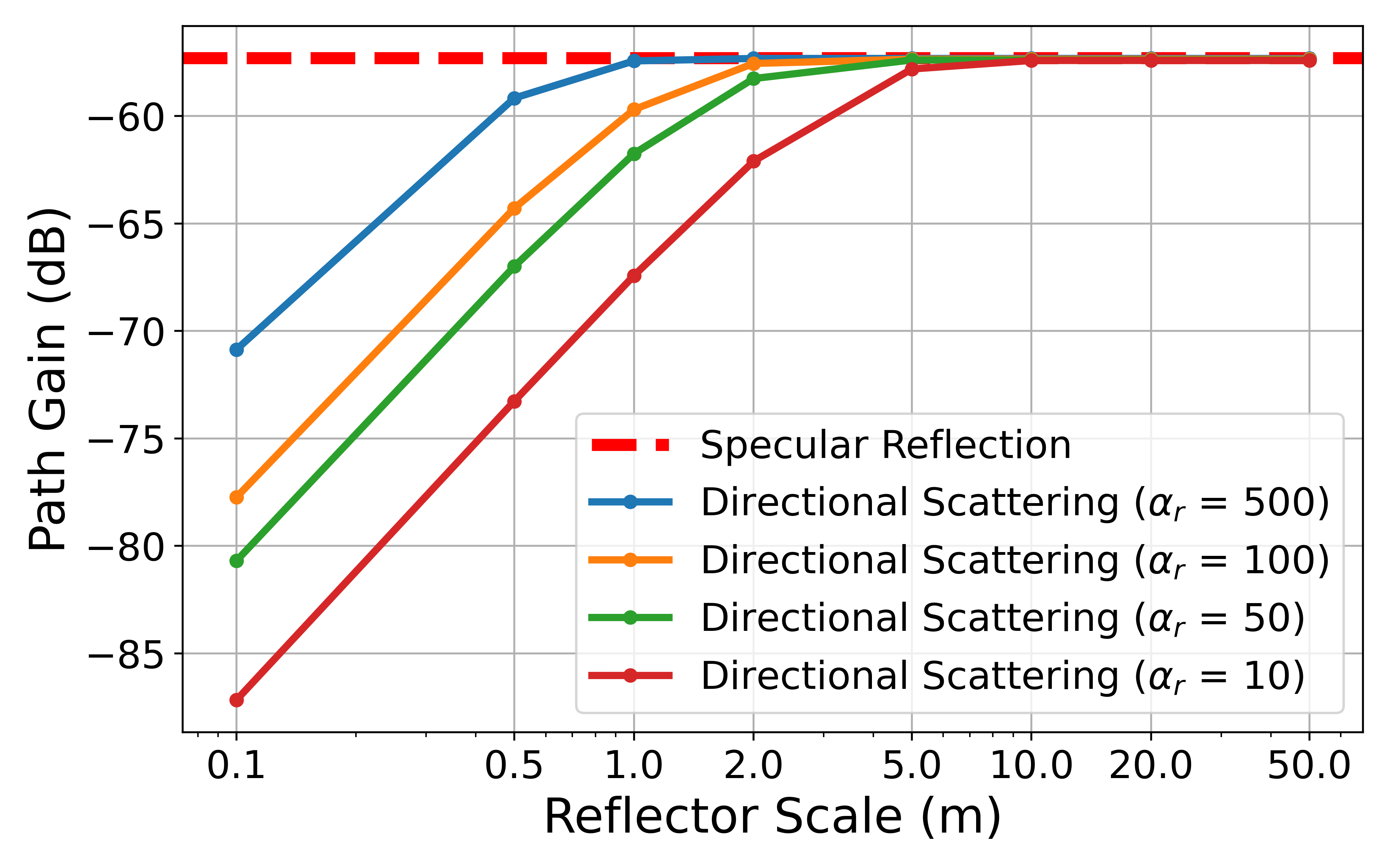}
    \captionof{figure}{Directional-scattering path gain converges to specular reflection path gain as the scattering footprint grows and directionality increases.}
    \label{fig:scat_converge}
  \end{minipage}
  \vspace{-10pt}
\end{figure}

Since we approximate specular reflection using directional scattering, we next establish a theorem showing that this approximation preserves the path gain:

\begin{theorem}
\label{theorem:scattering_convergence}
\textbf{Directional-Scattering Approximation:} For a planar reflector with scattering pattern $f_s(\psi)$, and reflection coefficient $\Gamma$, when the ratio of scattered field amplitude to incident field amplitude is given by the following equation: \footnote{This equation is simplified here by eliminating the cross-polar discrimination coefficient, for the sake of clarity. The actual implementation includes an additional transform matrix to account for the polarization.}
\begin{equation}
    \label{eq:scattering_path_coeff}
    \frac{|E_{scat}|}{|E_{inc}|} = \Gamma \sqrt{f_s(\psi)},
\end{equation}
The directional-scattering path gain asymptotically converges to the specular reflection path gain as the scattering footprint grows sufficiently large and the scattering lobe becomes sufficiently narrow.
\end{theorem}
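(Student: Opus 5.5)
The plan is to compare two explicit expressions for the received power in the geometry of \cref{fig:single_reflector_scene}: one from integrating the directional-scattering contributions over the reflector surface $S$, the other from the image-source formula for specular reflection. We then take the two limits stated in the theorem.

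\textbf{Specular reference.} By the image method, the specular path gain is
\[
G_{\mathrm{spec}} = |\Gamma|^2 \Bigl(\frac{\lambda}{4\pi (d_1+d_2)}\Bigr)^2 ,
\]
where $d_1,d_2$ are the TX--specular-point and specular-point--RX distances.

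\textbf{Scattered path gain.} We use the standard bistatic surface-scattering model (as in the effective-roughness model of~\cite{degli2007measurement}). A surface element $dS$ at $\mathbf{x}$ intercepts incident power density $P_t/(4\pi r_1^2)$ over its projected area $\cos\theta_i\,dS$. By \eqref{eq:scattering_path_coeff}, it re-radiates a fraction $|\Gamma|^2 f_s(\psi)$ per steradian toward the RX. This gives
\[
G_{\mathrm{scat}}=\int_S \frac{|\Gamma|^2\cos\theta_i(\mathbf{x})\, f_s(\psi(\mathbf{x}))}{(4\pi)^2 r_1^2 r_2^2}\,\lambda^2\,\frac{dS}{?},
\]
with the constant fixed by the aperture factor $\lambda/4\pi$ in \eqref{eq:rx_field_single_ray}. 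The key normalization is that $F_{\alpha_r}(\theta_i)$ makes $\int_{\mathrm{hemisphere}} f_s\,d\Omega_s = 1$, so scattering conserves the reflected power $|\Gamma|^2$.

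\textbf{Change of variables.} Next we map each point $\mathbf{x}$ of the reflector to the deviation angle $\psi$ between the outgoing direction toward the RX and the specular direction of the incident ray at $\mathbf{x}$. Near the specular point $\mathbf{x}_0$, a Taylor expansion gives $\psi \approx |J(\mathbf{x}-\mathbf{x}_0)|$ for a linear map $J$ set by $d_1,d_2,\theta_i$. This yields
\[
dS \approx \frac{d_1 d_2 (d_1+d_2)}{(d_1+d_2)^2 \cos\theta_i}\cdot\frac{1}{?}\,d\Omega_\psi ,
\]
i.e.\ the Jacobian $\frac{r_1^2r_2^2}{(r_1+r_2)^2\cos\theta_i}$ that appears in physical-optics stationary-phase arguments. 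Substituting, the $r_1^2r_2^2\cos\theta_i$ factors cancel and leave
\[
G_{\mathrm{scat}}\approx |\Gamma|^2\Bigl(\frac{\lambda}{4\pi(d_1+d_2)}\Bigr)^2\int_{\Psi(S)} f_s(\psi)\,d\Omega_\psi ,
\]
where $\Psi(S)$ is the image of the footprint in deviation-angle space.

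\textbf{Limits.} As $\alpha_r\to\infty$, the lobe $f_s$ concentrates within a cone of half-width $O(\alpha_r^{-1/2})$. Two consequences follow:
\begin{itemize}
\item The linearization error, of relative order $O(\psi)$, vanishes.
\item The incidence-angle dependence of $F_{\alpha_r}(\theta_i)$ across the effective footprint becomes negligible.
\end{itemize}
If the footprint $S$ is large enough that $\Psi(S)$ contains this cone, then $\int_{\Psi(S)} f_s\,d\Omega_\psi\to 1$ by the normalization. Hence $G_{\mathrm{scat}}\to G_{\mathrm{spec}}$.

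\textbf{Main obstacle.} The hard step is the Jacobian computation and controlling its error uniformly. This requires a Laplace-type argument: bound the tail mass of $f_s$ outside a shrinking cone using $\bigl(\tfrac{1+\cos\psi}{2}\bigr)^{\alpha_r}\le e^{-\alpha_r\psi^2/4}$, and bound the smooth factors by their values at $\mathbf{x}_0$ inside the cone. Handling the normalization near grazing incidence, where the lobe is truncated by the surface, will need $\theta_i$ bounded away from $90^\circ$.
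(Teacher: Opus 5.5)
Your argument is correct and reaches the paper's limit $\bigl(\lambda\Gamma/(4\pi r_{\mathrm{spec}})\bigr)^2$, but by a different change of variables.

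\textbf{The paper's route.} It starts from the same differential power $(\lambda\Gamma/4\pi)^2 f_s\cos\theta_i/(r_i^2 r_s^2)\,dS'$. It multiplies this by an ``effective-area'' ratio $r_i^2/(r_i+r_s)^2$, justified by a receiver-aperture picture. It then passes to the angles $(\theta_s,\phi_s)$ of the direction from the scattering point to the RX, using the exact global Jacobian $r_s^2\tan\theta_s$. Finally it invokes (EC) and the sifting property, with $\cos\theta_i=\cos\theta_s$ at the specular point.

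\textbf{Your route.} You map the surface onto the deviation solid angle $\Omega_\psi$ about the \emph{local} specular direction. That is the measure in which the normalization of $f_s$ literally holds. In the RX-direction variables the specular direction moves with the scattering point, so there $\int f_s\,d\Omega_s\approx\bigl(r_i/(r_i+r_s)\bigr)^2$ (evaluated at the specular point) rather than $1$. Your Jacobian makes this explicit: at the specular point $d\Omega_s=\bigl(r_1/(r_1+r_2)\bigr)^2 d\Omega_\psi$. This is exactly the paper's effective-area ratio, so you derive the factor that the paper posits.

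\textbf{What each buys.} The paper's route is shorter, uses a global change of variables, and conveys the physical capture picture. Yours is only local, so it needs the Laplace and tail estimates you sketch. In return it controls the $\theta_i$-dependence of $F_{\alpha_r}$. It also shows that as $\alpha_r\to\infty$ the limit holds for any fixed reflector containing a neighbourhood of the specular point, so the footprint only has to grow when $\alpha_r$ is held fixed. Your grazing caveat costs nothing, since TX and RX strictly above the plane force $\theta^*<90^\circ$.

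\textbf{Fixes before writing up.} By your own intensity/aperture reasoning, both ``?'' placeholders equal $1$, which matches the paper's differential power. The first Jacobian expression should read $d_1^2d_2^2/\bigl((d_1+d_2)^2\cos\theta_i\bigr)$, consistent with your ``i.e.''.
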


We refer readers to Supplemental Material~\cref{app:scattering_convergence} for the full proof of this theorem. The intuition is that sufficiently large reflectors capture almost all power radiated from the transmitter's lower hemisphere. The captured power is then re-radiated towards the receiver, weighted by the scattering pattern. The highest weights are contributed by the scattering points near the specular reflection point. In the extreme case, when the scattering pattern collapses to a Dirac delta function, it matches the case of specular reflection. It is also worth noting that the classic Fresnel equations~\cite{balanis2016antenna}, which are widely used to compute specular reflection gain in wireless simulators, are derived under the assumption of an infinitely large planar interface. This assumption rarely holds in real environments with finite-sized reflectors. Our approximation smoothly bridges the transition from idealized infinite surfaces to realistic finite-scale reflectors, resulting in behavior that more closely matches real-world scenarios.

In practice, the scattering footprint does not have to grow infinitely large to reach a good approximation. The convergence rate depends on the directionality of the scattering pattern. We set up a simple experiment to demonstrate this behavior. As shown in \cref{fig:single_reflector_scene}, we put a pair of TX and RX in a scene with a single planar reflector. We put the TX and RX symmetrically on the same side of the reflector. We use Sionna's path solver to find the specular reflection path gain as a reference. We then use \name to simulate the gain of the directional scattering path with different reflector sizes and scattering pattern lobewidth. \cref{fig:scat_converge} shows the result. As shown, all curves converge to the specular path gain as the reflector scale increases. The more directional the scattering pattern, the faster the convergence. More analysis about the approximation behavior can be found in Supplemental Material~\cref{app:dev_analysis}.

\subsection{Optimization Framework}
\label{sec:optimization}
We use a fully-connected neural network with four hidden layers of 64 neurons each to parameterize the material properties in the reconstructed mesh. The neural network is evaluated at the 3D coordinates of each ray-surface intersection point (with 6-octave positional encoding~\cite{NeRF}) to predict the permittivity, conductivity, scattering coefficients, and cross-polar discrimination coefficients of the surface. The predicted material parameters are fed into physics-based equations to compute the scattered fields. The scattered fields are eventually integrated to compute the predicted AoA power spectra. We adopt the log mean absolute error (log-MAE) as the loss function for comparing the predicted and measured AoA power spectra, defined as:

\begin{equation}
    \label{eq:nmae}
    \mathrm{L_{log-MAE}} = \frac{1}{K}\sum_{i=1}^{K} |\log P_i^{\mathrm{pred}} - \log P_i^{\mathrm{meas}}|,
\end{equation}

where the AoA power spectrum is represented as a real-valued matrix of size $K = N_{el}\times N_{az}$, with $N_{az}$ and $N_{el}$ denoting the numbers of elevation and azimuth angles, respectively. We select log-MAE because it effectively captures errors across a wide dynamic range of power values, which is common in mmWave channels.

\subsection{Implementation}
We implement \name as a fork of Sionna v1.2.1~\cite{sionna}, which is built on Mitsuba v3.7.1~\cite{jakob2022mitsuba3} and Dr.Jit v1.2.0~\cite{Jakob2022DrJit}. We build a new stand-alone ray-tracing solver for synthesizing the AoA power spectrum, and a customized radio material implementation that integrates the directional scattering approximation and neural network material parameterization. We keep the original diffuse scattering unchanged but only replace the specular reflection with the directional scattering model. The scattering coefficient is now used to account for the power distribution between diffuse scattering and directional scattering. We test the simulator on an NVIDIA RTX 3090 GPU and observe a speed of 45 fps for training and 120 fps for inference with one million rays and up to two bounces, satisfying the real-time inference requirement. 

\vspace{-5pt}
\section{Results}
\label{sec:eval}
We conduct both real-world and simulation experiments to evaluate \name. 

\subsection{Real-world Experiments}
We constructed a real-world mmWave testbed in an indoor laboratory environment. A Keysight M9384B vector signal generator serves as the transmitter, producing a continuous-wave 28 GHz sinusoidal signal that is radiated via a Mi-Wave 261(34)-20/595 horn antenna. On the receiver side, an N9021B spectrum analyzer captures the incoming signal through an identical horn antenna, which is mechanically rotated in the azimuth plane to measure the angular power spectrum across eight equally spaced azimuth angles, as illustrated in \cref{fig:real_scene_setup}(a).

The 3D environment is captured using the PolyCam app on an iPad, from which RGB-D frames and camera trajectories are exported and processed via Neural RGB-D Reconstruction to generate a 3D mesh model of the scene (\cref{fig:real_scene_setup}(b)). Received signal power measurements are collected at 29 sparsely distributed locations, whose spatial distribution is shown in \cref{fig:real_scene_setup}(c), of which 23 locations (80\%) are reserved for training and the remaining 6 (20\%) for testing. The \name simulator is then configured to match the physical testbed parameters, including antenna radiation patterns, transmit power, and system losses. 

We train \name for 20 epochs. The directional scattering lobe parameter is set to $\alpha_r=300$, corresponding to a half-power beamwidth of 4.8$^\circ$. We use Sionna v0.19.1~'s path solver with line-of-sight (LoS) and specular reflection enabled as a baseline, and follow the procedure in~\cite{diff-rt-calibration} to perform calibration.

\cref{fig:real_results}(a) presents the predicted AoA power spectrum at a representative test position, where \name demonstrates substantially closer agreement with the ground truth compared to Sionna. \cref{fig:real_results}(b) shows the cumulative distribution function (CDF) of the absolute prediction error over the evaluation set (6 test positions $\times$ 8 angles), where \name achieves a mean absolute error of 4.60~dB, compared to 15.11~dB for Sionna — a reduction of more than 10~dB. The primary source of Sionna's degraded performance is noisy geometry reconstruction, which introduces errors in both path identification and power calibration. In contrast, \name's directional scattering model is inherently robust to geometric noise, enabling reliable recovery of dominant propagation paths and yielding significantly more accurate power predictions. These results validate the practical effectiveness of \name in real-world deployment scenarios, where geometrically perfect reconstructions are rarely attainable.

\begin{figure*}[t]
    \centering
    \hfill
    \begin{minipage}[t]{0.55\textwidth}
        \centering
        \includegraphics[width=\linewidth]{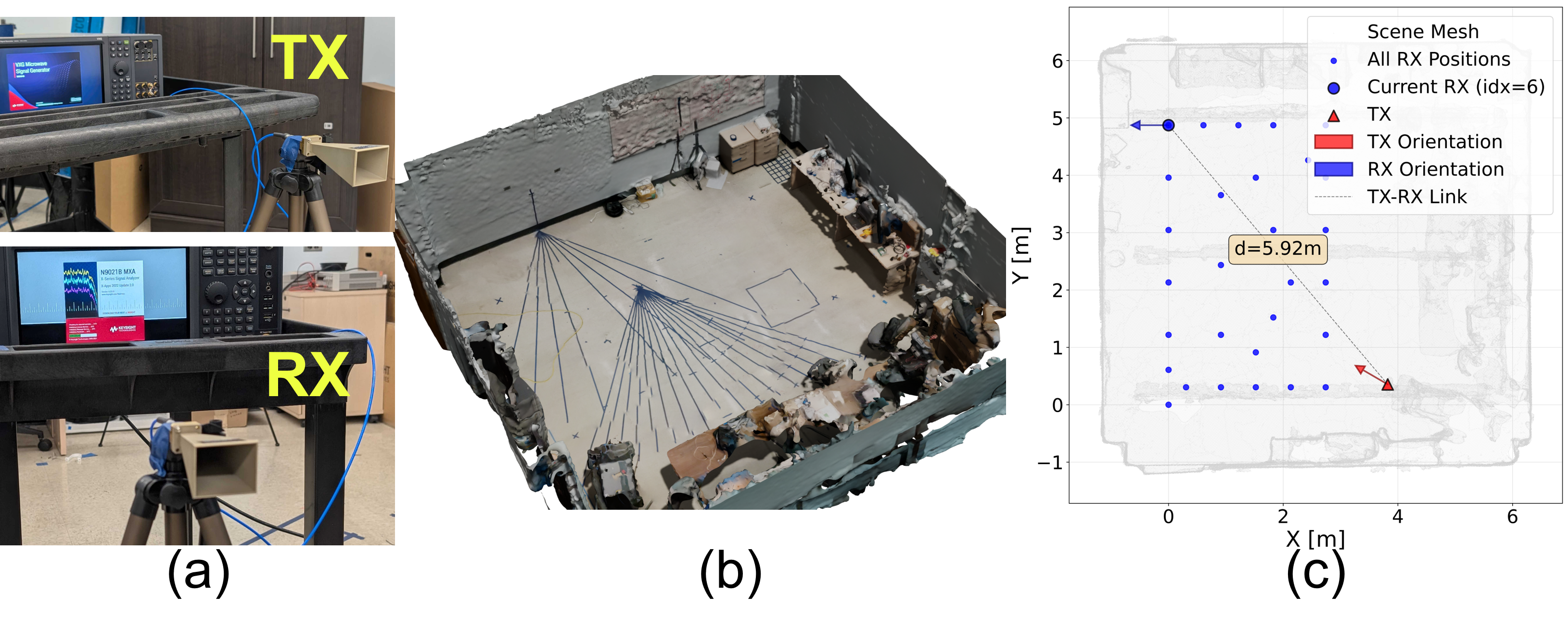}
        \caption{Real-world testbed setup.}
        \label{fig:real_scene_setup}
    \end{minipage}
    \begin{minipage}[t]{0.43\textwidth}
        \centering
        \includegraphics[width=\linewidth]{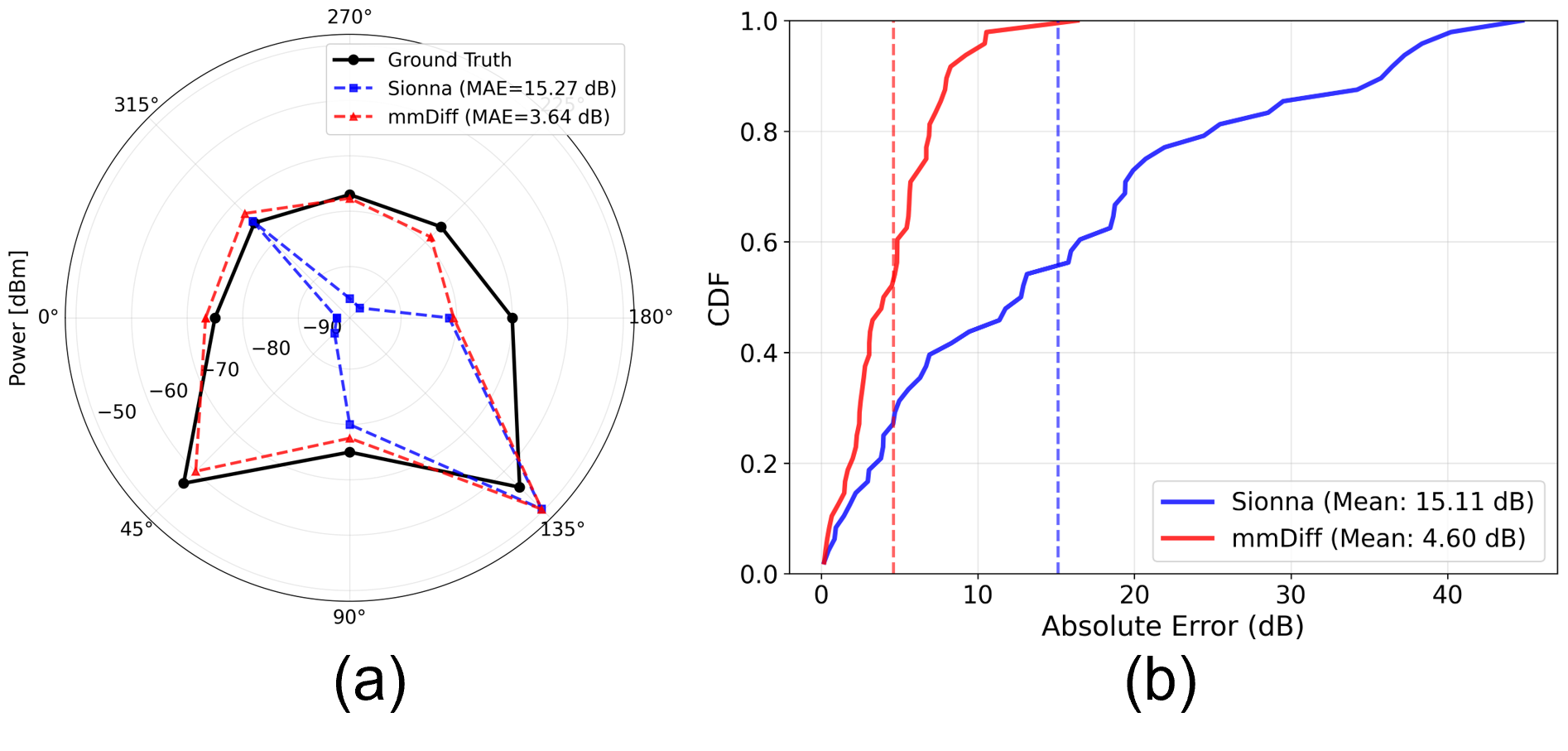}
        \caption{Results of the real-world experiment.}
        \label{fig:real_results}
    \end{minipage}
    \hfill
\end{figure*}

\subsection{Simulation Experiments}
\label{sec:dataset}
\begin{figure}
    \centering
    \includegraphics[width=0.9\columnwidth]{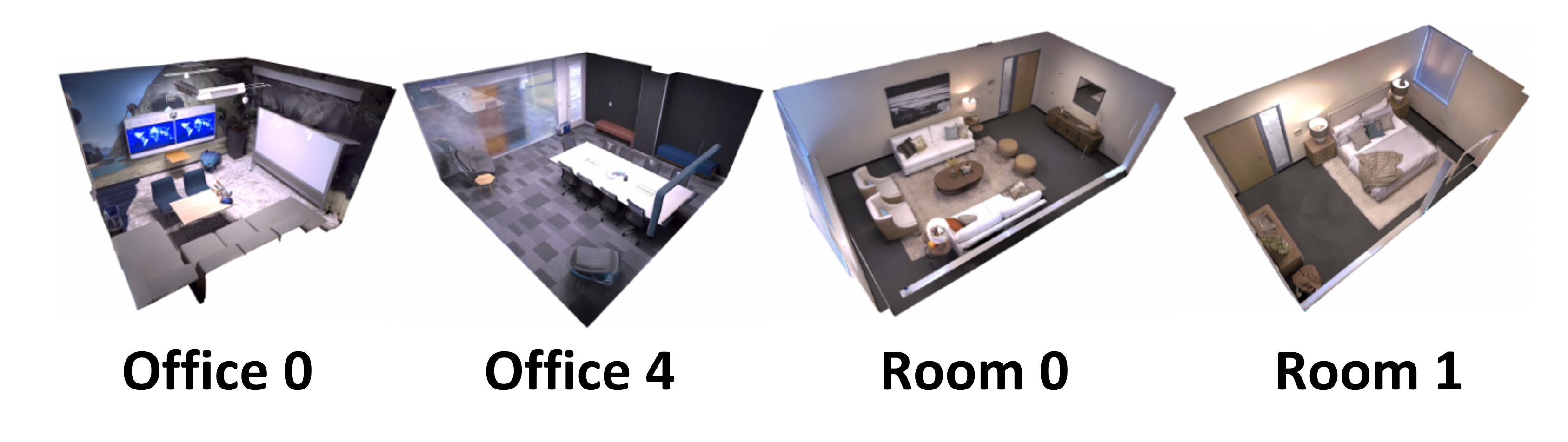}
    \caption{Replica~\cite{replica19arxiv} scenes used in evaluation.}
    \label{fig:replica_scenes}
    \vspace{-15pt}
\end{figure}
\noindent\textbf{Datasets} We evaluate \name on four realistic 3D indoor scenes from the Replica dataset~\cite{replica19arxiv} (\cref{fig:replica_scenes}), using the original meshes as ground truth (GT) with ITU materials~\cite{series2015effects} assigned based on semantic labels. For each scene, we create a camera trajectory in Blender and render RGB-D images using the Replica Renderer; each trajectory contains 2800 poses covering the majority of the scene area. These poses also serve as receiver locations in a wireless simulator to generate the corresponding mmWave channel datasets (TRAJ). We apply Neural RGB-D Reconstruction~\cite{nrgbd} to recover scene geometry from the RGB-D images, producing reconstructed meshes (REC), and manually remove floaters introduced by the algorithm. This process introduces reconstruction noise similar to real-world captures. We further sample ${\sim}500$ random receiver poses per scene as a held-out test set (RAND) to probe diverse viewpoints. For all datasets, the TX is fixed at a single pose. Both transmitter and receiver use a $4{\times}4$ antenna array; simulations run at 28~GHz with one million rays traced to a maximum depth of one. We empirically choose $\alpha_r=100$ for all simulation experiments. Further dataset details appear in Supplemental Material~\cref{app:dataset_details}.

\noindent\textbf{Baselines} We use Sionna v0.19.1~'s path solver with line-of-sight (LoS) and specular reflection enabled as a baseline. Sionna traces diffuse scattering paths; however, these paths are used for predicting the Channel Impulse Response (CIR), rather than approximating the specular path power. Hence, we disabled diffuse scattering for Sionna in our experiments. Sionna does not directly produce AoA spectra. We apply the same beamforming procedure described in ~\cref{eq:mc_integration} and~\cref{eq:aoa_power_spectrum} to obtain comparable AoA spectra from Sionna's traced paths. 

\noindent\textbf{Experiment~1: Robustness to Geometry Noise}

\begin{figure}[!ht]
    \centering
    \includegraphics[width=0.8\columnwidth]{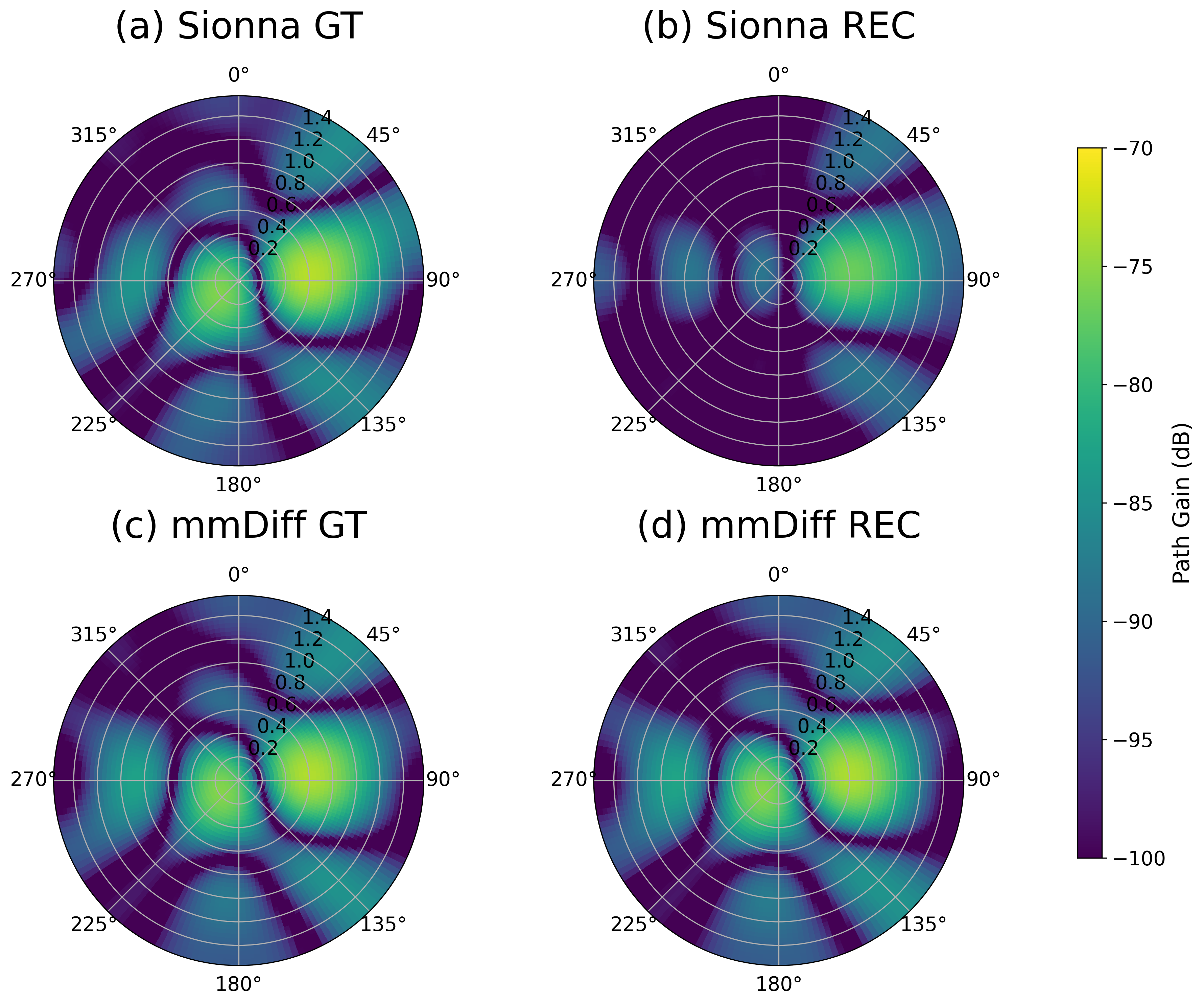}
    \caption{Sionna exhibits significant deviation in the REC environment while \name performs consistently despite geometry noise.}
    \vspace{-5pt}
    \label{fig:robustness}
\end{figure}

\begin{table}[t]
\centering
\caption{Robustness under Geometry Noise (TRAJ Dataset)}
\setlength{\tabcolsep}{2pt}
\renewcommand{\arraystretch}{0.95}
\small
\begin{tabular}{ll|ccc|ccc}
\hline
Scene & Methods 
& \multicolumn{3}{c|}{Top-1 Peak} 
& \multicolumn{3}{c}{Top-5 Peaks} \\ 
& & F1$\uparrow$ & AAE$\downarrow$ & APE$\downarrow$
  & F1$\uparrow$ & AAE$\downarrow$ & APE$\downarrow$ \\
\hline
\multirow{2}{*}{Office 0}
 & Sionna  & 0.8046 & 3.15 & 1.27 & 0.7431 & 3.30 & 1.93 \\ 
 & \textbf{\name} & \textbf{0.9954} & \textbf{0.29} & \textbf{0.14}
          & \textbf{0.9804} & \textbf{0.38} & \textbf{0.18} \\ 
\hline
\multirow{2}{*}{Office 4}
 & Sionna  & 0.9546 & 2.00 & 0.79 & 0.8886 & 1.74 & 0.98 \\ 
 & \textbf{\name} & \textbf{0.9950} & \textbf{0.52} & \textbf{0.11}
          & \textbf{0.9774} & \textbf{0.62} & \textbf{0.16} \\ 
\hline
\multirow{2}{*}{Room 0}
 & Sionna  & 0.9707 & 1.71 & 0.97 & 0.9133 & 1.64 & 1.02 \\ 
 & \textbf{\name} & \textbf{0.9936} & \textbf{1.33} & \textbf{0.55}
          & \textbf{0.9751} & \textbf{1.07} & \textbf{0.62} \\ 
\hline
\multirow{2}{*}{Room 1}
 & Sionna  & 0.8086 & 3.89 & 1.75 & 0.7710 & 3.14 & 2.16 \\ 
 & \textbf{\name} & \textbf{0.9925} & \textbf{0.26} & \textbf{0.07}
          & \textbf{0.9879} & \textbf{0.28} & \textbf{0.10} \\ 
\hline
\end{tabular}
\label{tab:robustness}
\vspace{-10pt}
\end{table}
We first evaluate the robustness of \name to geometry noise that is commonly observed in 3D reconstruction.

In this experiment, we run both simulators on the ground-truth (GT) and reconstructed (REC) meshes across all four Replica scenes. To exclude the effect of material estimation, we assign the same uniform material to all scene objects. We generate AoA power spectra on REC and GT meshes along the same TRAJ datasets.
\cref{fig:robustness} illustrates the results for one of the receiver poses in Office 0. As is shown, on the GT geometry, \name and Sionna produce similar spectra; however, on the REC geometry, \name closely matches the GT results while Sionna exhibits substantial deviations.

For quantitative evaluation, we detect dominant peaks in the AoA power spectra using a maximum-filter-based peak detection algorithm. We extract the top-1 and top-5 peaks and match them between the REC and GT spectra based on angular proximity (within 20$^\circ$). We report the F1-score for peak detection accuracy and compute the Average Angle Error (AAE) in degrees and Average Power Error (APE) in dB for matched peaks. We choose these metrics as they directly reflect the accuracy of angle and power estimation for practical applications such as beam prediction.
\cref{tab:robustness} summarizes the results: across all scenes and metrics, \name exhibits almost perfect peak detection accuracy and only minor angle and power error, consistently outperforming the Sionna baseline. This demonstrates the robustness of \name against geometry noise, a critical feature for real-world applications.

\noindent\textbf{Experiment 2: Scene Calibration and Channel Prediction}
\begin{table}[t]
\centering
\caption{Prediction Accuracy on RAND Dataset}
\setlength{\tabcolsep}{2pt}
\renewcommand{\arraystretch}{0.95}
\footnotesize
\begin{tabular}{ll|ccc|ccc}
\hline
Scene & Methods
& \multicolumn{3}{c|}{Top-1 Peak}
& \multicolumn{3}{c}{Top-5 Peaks} \\
& & F1$\uparrow$ & AAE$\downarrow$ & APE$\downarrow$
  & F1$\uparrow$ & AAE$\downarrow$ & APE$\downarrow$ \\
\hline
\multirow{3}{*}{Office 0}
 & NeRF\textsuperscript{2}  & 0.0329 & 12.86 & 8.96 & 0.2082 & 13.50 & 8.51 \\
 & Sionna & 0.6708 & 5.03 & 3.94 & 0.6794 & 3.88 & 4.52 \\
 & \name w/o Calib. & 0.6249 & 4.56 & 3.98 & 0.6845 & 4.02 & 4.97 \\
 & \textbf{\name Full} & \textbf{0.9168} & \textbf{1.84} & \textbf{1.08}
          & \textbf{0.9121} & \textbf{1.75} & \textbf{1.22} \\
\hline
\multirow{3}{*}{Office 4}
 & NeRF\textsuperscript{2}  & 0.0460 & 14.28 & 3.15 & 0.0816 & 12.74 & 4.06 \\
 & Sionna & 0.8154 & 4.16 & 2.03 & 0.7554 & 3.47 & 2.42 \\
 & \name w/o Calib. & 0.5643 & 8.16 & 7.15 & 0.6197 & 5.15 & 7.44 \\
 & \textbf{\name Full} & \textbf{0.9257} & \textbf{2.08} & \textbf{0.97}
          & \textbf{0.8994} & \textbf{1.58} & \textbf{1.11} \\
\hline
\multirow{3}{*}{Room 0}
 & NeRF\textsuperscript{2}  & 0.2611 & 12.85 & 0.91 & 0.3822 & 9.38 & 3.32 \\
 & Sionna & 0.9087 & 3.16 & 1.94 & 0.8345 & 2.71 & 2.11 \\
 & \name w/o Calib. & 0.6505 & 6.80 & 2.92 & 0.6948 & 4.71 & 3.65 \\
 & \textbf{\name Full} & \textbf{0.9534} & \textbf{1.36} & \textbf{0.56}
          & \textbf{0.9441} & \textbf{1.26} & \textbf{0.64} \\
\hline
\multirow{3}{*}{Room 1}
 & NeRF\textsuperscript{2}  & 0.1868 & 12.29 & 2.43 & 0.3186 & 9.41 & 4.14 \\
 & Sionna & 0.7859 & 5.39 & 4.34 & 0.7422 & 4.15 & 4.34 \\
 & \name w/o Calib. & 0.7473 & 4.59 & 2.10 & 0.7537 & 3.86 & 2.85 \\
 & \textbf{\name Full} & \textbf{0.9342} & \textbf{1.58} & \textbf{0.92}
          & \textbf{0.9287} & \textbf{1.43} & \textbf{0.95} \\
\hline
\end{tabular}
\label{tab:ch_prediction}
\vspace{-15pt}
\end{table}
We run end-to-end experiments to validate that \name's robustness and differentiability benefits the material calibration and channel prediction. For Sionna and \name, we separately generate synthetic datasets on the GT mesh as ground truth channels for supervision. During training, we use the ray-surface interaction points traced in the REC meshes to query the neural material model. We include another data-driven baseline approach, NeRF\textsuperscript{2}~\cite{NeRF2}, for this channel prediction task, which adopts NeRF-like ray tracing to learn the radiation fields of a wireless environment. The synthetic AoA power spectra generated by Sionna are used for training and testing of NeRF\textsuperscript{2} models. We train all models using poses along a designated trajectory (TRAJ) and test on poses randomly sampled in the scene (RAND). More details about the training procedure can be found in Supplemental Material ~\cref{app:train_details}.

The results of this experiment are shown in~\cref{tab:ch_prediction}: NeRF\textsuperscript{2} fails to generalize under such sparse sampling density due to the lack of explicit geometric priors; \name consistently outperforms both baselines across all scenes and metrics, achieving the highest accuracy in mmWave channel prediction. As an ablation study, we also report results for \name without calibration, which perform poorly and underscore both the necessity of material calibration and the effectiveness of \name. We also note that the overall performance on the RAND dataset is slightly lower than in \cref{tab:robustness}, as it includes more diverse receiver locations where geometric reconstruction is imperfect due to occlusions. Nevertheless, \name maintains high prediction accuracy, sufficient for practical deployment scenarios.

\noindent\textbf{Additional Results:} We include more experiments and results in the Supplemental Material~\cref{app:add_results}. 

\vspace{-5pt}
\section{Discussion}
\label{sec:limitation}
\begin{itemize}

    \item \textbf{Geometry optimization:} \name is differentiable with respect to the scene geometry, which has the potential application of calibrating scene geometry. We demonstrate this capability in Supplemental Material~\cref{app:geo_diff}, where we calibrate the translation and rotation of a planar reflector. However, calibrating the full geometry mesh is substantially harder and beyond this work's scope; we believe this is an interesting direction for future exploration.

    \item \textbf{Dynamic scenarios:} \name extends naturally to dynamic environments. In practice, large-scale reflectors, such as walls, ceilings, and building facades, are static and require no retraining as the environment evolves. For dynamic objects such as humans or moving furniture, visual tracking (e.g., from RGB-D cameras or LiDAR) can continuously update their positions and geometry within the scene representation. Leveraging \name's fast inference, these position updates translate directly into real-time channel predictions without re-running the full simulation pipeline, making \name well-suited for dynamic scenarios.

    \item \textbf{Phase Errors:} \name integrates received signal power over directions via Monte Carlo sampling, discarding per-ray phase to ensure accurate power integration. This may produce inconsistent interference patterns when multiple paths carry comparable strength, as we show in Supplemental Material~\cref{app:multipath}. However, in most real-world scenarios, mmWave transceivers are using very narrow beams, and therefore, there is only a single path dominating power compared to the other paths. Hence, the impact of phase error is negligible, and the calibration is still functional.
\end{itemize}

\vspace{-7pt}
\section{Conclusion}
\vspace{-7pt}
\label{sec:conclusion}
In conclusion, we propose the directional scattering approximation of specular reflection to handle the inevitable geometry imperfections in 3D reconstructed models, enabling robust and differentiable simulation of mmWave channels. We build \name, the first noise-robust, fully differentiable framework for mmWave scene calibration and channel prediction. Leveraging end-to-end differentiability, \name learns scene-specific material parameters in geometrically complex environments through gradient-based optimization, eliminating the need for manual parameter tuning. Overall, these properties position \name as a practical and deployable solution for real-world mmWave channel modeling, with applications spanning base station coverage prediction, network deployment planning, and beam training overhead reduction in next-generation wireless networks. Looking ahead, promising directions for future work include scaling \name to larger outdoor environments, extending to dynamic scenes with continuous online learning, and integrating mmWave sensing capabilities to enable joint communication and sensing.

\noindent\textbf{Acknowledgements:}
We thank the UCLA ICON group and the reviewers for their insightful comments. This work is supported by UCLA, and NSF awards 2238245.

{
    \small
    \bibliographystyle{ieeenat_fullname}
    \bibliography{main}
}

\newpage

\newpage
\appendix
\onecolumn

\begin{center}
\LARGE mmDiff: A Noise-Robust Differentiable Ray-Tracing Framework for mmWave Scene Calibration and Channel Prediction \\
{\Large\normalfont Supplementary Material}
\end{center}

\section{Proof of Theorem 1}
\label{app:scattering_convergence}
\begin{figure}[h]
    \centering
    \hfill
    \begin{minipage}[b]{0.45\linewidth}
        \centering
        \includegraphics[width=\linewidth]{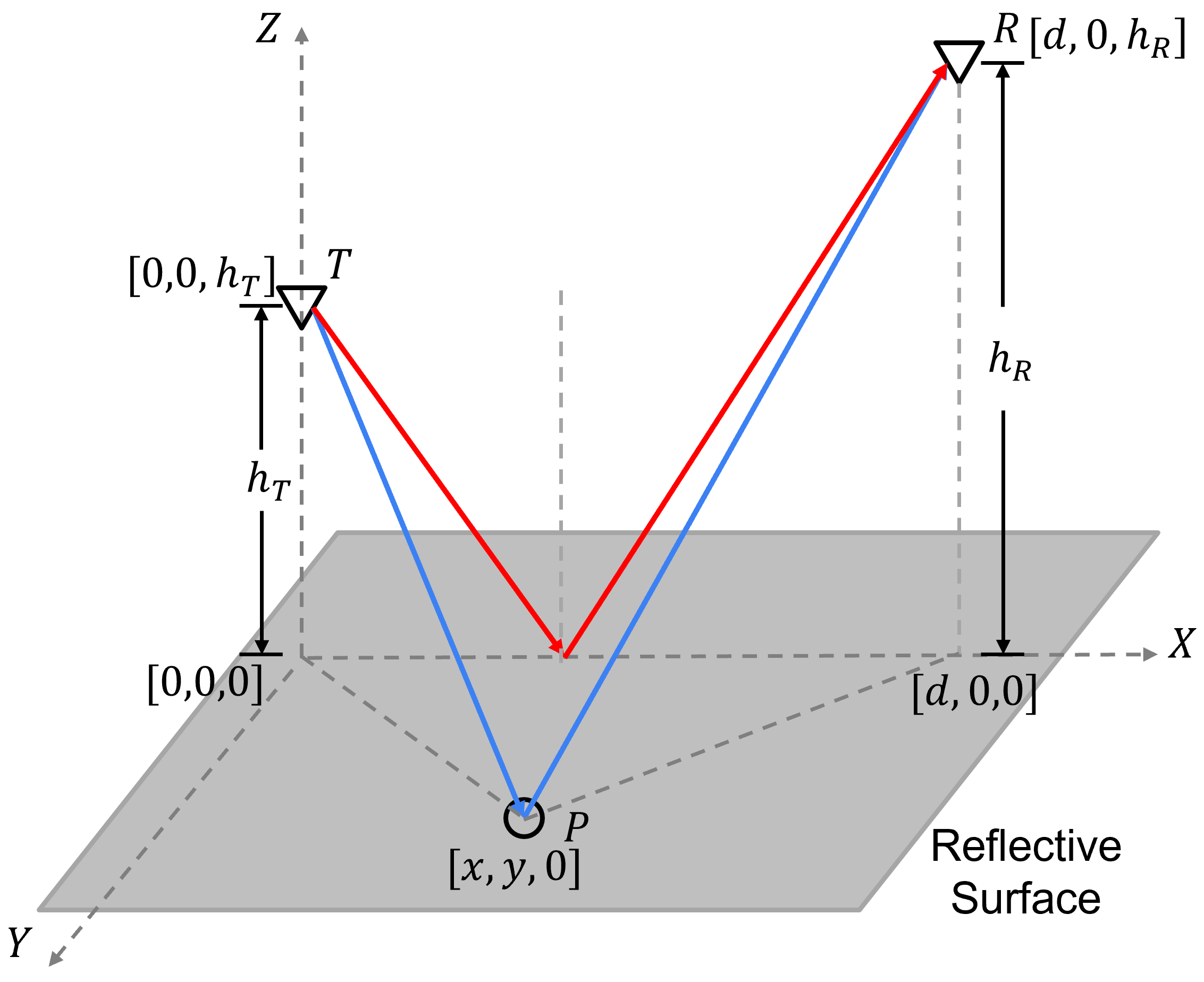}
        \caption{Geometry setting for proof of Theorem 1.}
        \label{fig:proof_geo}
    \end{minipage}
    \hfill
    \begin{minipage}[b]{0.45\linewidth}
        \centering
        \includegraphics[width=\linewidth]{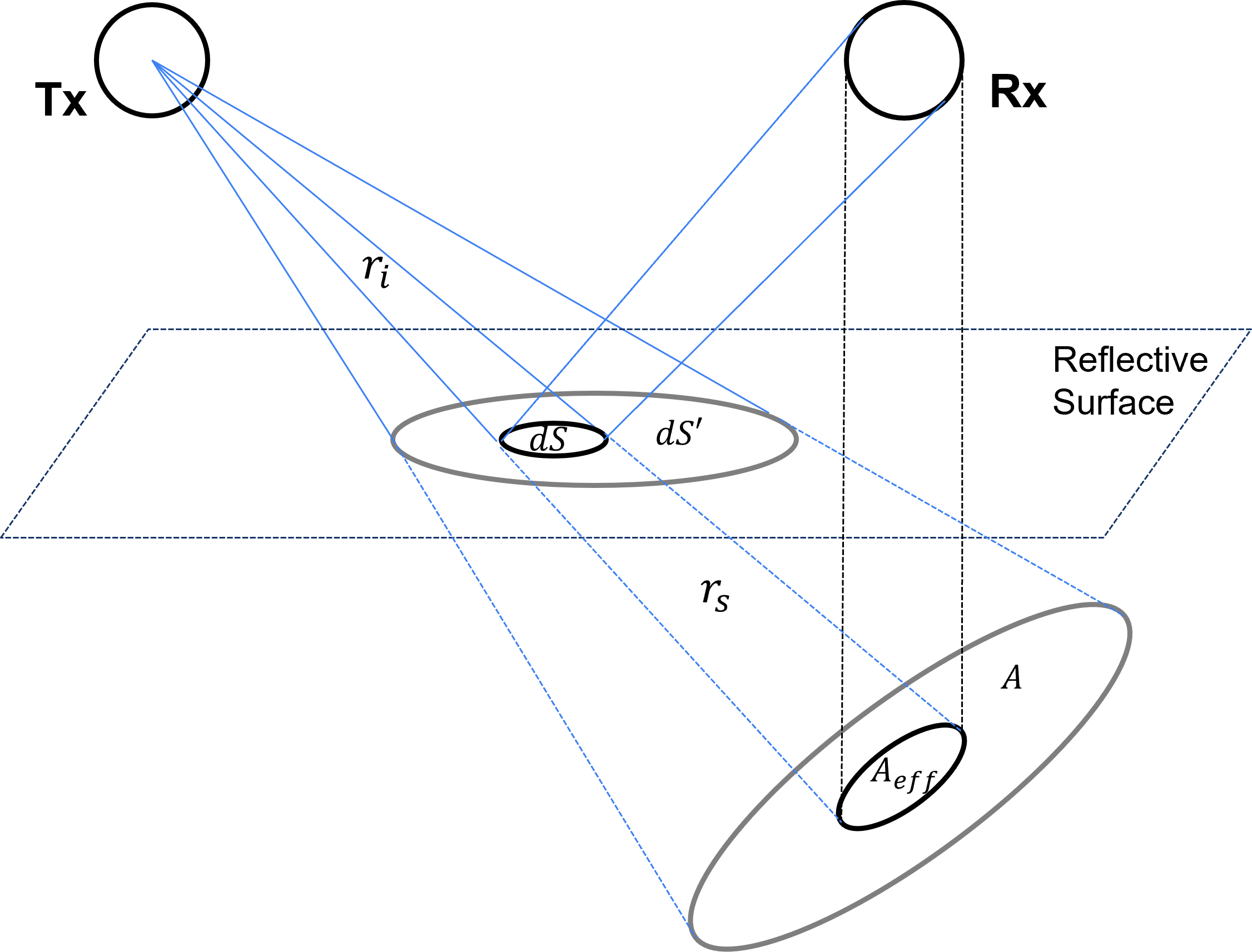}
        \caption{Illustration of the effective area $dS$.}
        \label{fig:proof_effective_area}
    \end{minipage}
    \hfill
\end{figure}

We prove the following equation holds:
\begin{equation}
    \label{eq:scattering_convergence}
    \lim_{\alpha_r \rightarrow \infty}\lim _{S \rightarrow \infty} P_{\alpha_r}(S)=\left(\frac{\lambda \Gamma}{4 \pi r_{\text {spec }}}\right)^2,
\end{equation}
where $P_{\alpha_r}(S)$ denotes the power received at RX that is carried by single-bounce diffuse-scattering paths whose scattering point lies in area $S$, We show that when the reflector surface is sufficiently large, the received scattering power approaches the specular reflection power asymptotically as the scattering pattern becomes more directional. 

We start with an ideal, vertically polarized, isotropic antenna model and will generalize to an arbitrary antenna pattern towards the end. 
\begin{proof}
Since the received power will always be a fraction of the transmitted power, for simplicity, we assume the transmitted power $P_t = 1$. 

For an infinitesimal patch of the surface, $dS'$, the solid angle subtended by the patch at the TX is 
\begin{equation}
    d \Omega_i=\frac{\cos \theta_i d S'}{r_i^2},
\end{equation}
where $\theta_i$ is the incident angle to the scattering point.

Hence, the differential incident power is 
\begin{equation}
    d P_{\mathrm{inc}}=\frac{1}{4 \pi} d \Omega_i=\frac{\cos \theta_i}{4 \pi r_i^2} d S'
\end{equation}

Since the scattered field amplitude is $\Gamma \sqrt{f_s (\alpha_r; \theta_s, \phi_s)}$ times of the incident field (\cref{eq:scattering_path_coeff}), the differential scattered power is 

\begin{equation}
    d P_{\text {sca }}=\Gamma^2 f_s(\alpha_r; \theta_s, \phi_s) d P_{\text {inc }}
\end{equation}

The scattering field propagates to RX in free space, hence its power decay follows the Friis equation of free-space path loss. 

\begin{equation}
    d P_r = (\frac{\lambda}{4\pi r_s})^2 dP_{\text {sca }},
\end{equation}

Thus, the differential received power is:
\begin{equation}
    \label{eq:diff_power_contribution}
    d P_r=\left(\frac{\lambda \Gamma}{4 \pi}\right)^2 f_s(\alpha_r; \theta_s, \phi_s) \frac{\cos \theta_i}{r_i^2 r_s^2} d S' 
\end{equation}

To get the total received power, we need to integrate over the whole reflective surface:
\begin{equation}
    \label{eq:int_rx_power}
    P_{\alpha_r}(S)=\iint_S\left(\frac{\lambda \Gamma}{4 \pi}\right)^2 f_s(\alpha_r; \theta_s, \phi_s) \frac{\cos \theta_i}{r_i^2 r_s^2} d S' 
\end{equation}

It is essential to note that only a fraction of the total shined area is captured by the receiver's aperture. As illustrated in \cref{fig:proof_geo}, the total illuminated area by a ray is denoted as $dS'$. The total power that reaches $dS'$ is scattered (re-distributed) to different directions, only a fraction of which is captured by the effective aperture of the receiver, denoted as $A_{eff}$. According to the geometry, the effective area, $dS$, that contains the power that reaches the receiver is related to $dS$ by the following equation:

\begin{equation}
    \frac{dS}{dS'} = \frac{A_{eff}}{A} = \frac{r_i^2}{(r_i+r_s)^2},
\end{equation}

where $A$ is an auxiliary variable denoting the subtended area of the ray at distance $r_i + r_s$.

Plug $dS$ into the power integration, we get:

\begin{equation}
    \label{eq:int_rx_power_effective_area}
    P_{\alpha_r}(S)=\iint_S\left(\frac{\lambda \Gamma}{4 \pi}\right)^2 f_s(\alpha_r; \theta_s, \phi_s) \frac{\cos \theta_i}{r_i^2 r_s^2} \frac{r_i^2}{(r_i+r_s)^2}dS
\end{equation}

To calculate the integral, we perform a change of variable to spherical coordinates $(\theta_s, \phi_s)$ at the scattering point. We use the geometry model in \cref{fig:proof_geo} to find the Jacobian matrix, where TX and RX are located at points $\mathbf{T} = (-0, 0, h_T), \mathbf{R} = (d, 0, h_R)$, respectively; $\mathbf{P}= (x, y, 0)$ represents an arbitrary scattering point on the reflector plane. Since $dS = dx dy$, we want to change the basis from $(x, y)$ to $(\theta_s, \phi_s)$. According to the geometry,

\begin{equation}
    \mathbf{k_s} = \begin{bmatrix}
    \sin\theta_s \cos\phi_s \\
    \sin\theta_s \sin\phi_s \\
    \cos\theta_s
    \end{bmatrix} = \frac{1}{r_s}\begin{bmatrix}
    d-x \\
    -y \\
    h_R
    \end{bmatrix},
\end{equation}

where $r_s = \sqrt{(d-x)^2 + y^2 + h_R^2}$. And we can solve the determinant of the Jacobian:

\begin{equation}
    |J| = \det(A) = \begin{vmatrix}
    \frac{\partial x}{\partial \theta_s} & \frac{\partial x}{\partial \phi_s} \\
    \frac{\partial y}{\partial \theta_s} & \frac{\partial y}{\partial \phi_s}
    \end{vmatrix} = r_s^2\tan\theta_s
\end{equation}
Apply the change-of-variable:
\begin{align}
 \lim _{S \rightarrow \infty}P_{\alpha_r}(S) 
&= \int_{0}^{2\pi} \int_{0}^{\pi/2}
   \left(\frac{\lambda \Gamma}{4 \pi}\right)^2 
   f_s(\alpha_r; \theta_s, \phi_s) 
   \frac{\cos \theta_i}{r_i^2 r_s^2} 
   \frac{r_i^2}{(r_i+r_s)^2} 
   r_s^2 \tan\theta_s \, d\theta_s \, d\phi_s \nonumber \\
&= \left(\frac{\lambda \Gamma}{4 \pi}\right)^2 
   \int_{0}^{2\pi} \int_{0}^{\pi/2}
   f_s(\alpha_r; \theta_s, \phi_s) 
   \sin\theta_s \,
   \frac{1}{(r_i+r_s)^2} 
   \frac{\cos \theta_i}{\cos\theta_s} 
   d\theta_s \, d\phi_s
\end{align}
The integral range includes the entire upper hemisphere to adhere to the limit of a sufficiently large surface.
According to the energy conservation (EC) property of the scattering pattern:

\begin{equation}
    \label{eq:energy_cons_scattering}
     \!\int_{0}^{2\pi}\int_{0}^{\pi/2}\!
    f_{s}(\alpha_r; \theta_s, \phi_s)
    \,\sin\theta_sd\theta_s\,d\phi_s\, \;=\;1.
    \tag{EC}
\end{equation}
As $\alpha_r\rightarrow\infty$, the scattering pattern approaches a Dirac Delta function in 3D, and $\cos\theta_i=\cos\theta_s$ at the specular reflection point. Thus, according to the sifting property of the Delta function, we have:

\begin{equation}
    \lim_{\alpha_r\rightarrow\infty}\lim _{S \rightarrow \infty}P_{\alpha_r}(S) =  \left(\frac{\lambda \Gamma}{4 \pi}\right)^2 \frac{1}{(r^*_i+r^*_s)^2} =  \left(\frac{\lambda \Gamma}{4 \pi r_{spec}}\right)^2 ,
\end{equation}
where $r^*_i$ and $r^*_s$ are the incident and scattering segment lengths when the scattering point matches the specular reflection point.
\end{proof}

The above proof assumes a vertically-polarized isotropic antenna at both TX and RX. Since we model the antenna radiation pattern and polarization as a separated multiplicative factor (\cref{sec:ray-tracing}), the above proof can be easily generalized to arbitrary antenna patterns and polarization. 

\newpage
\section{Deviation Analysis}
\label{app:dev_analysis}

\begin{figure*}[t]
    \centering
    \begin{subfigure}[t]{0.32\textwidth}
        \centering
        \includegraphics[width=\textwidth]{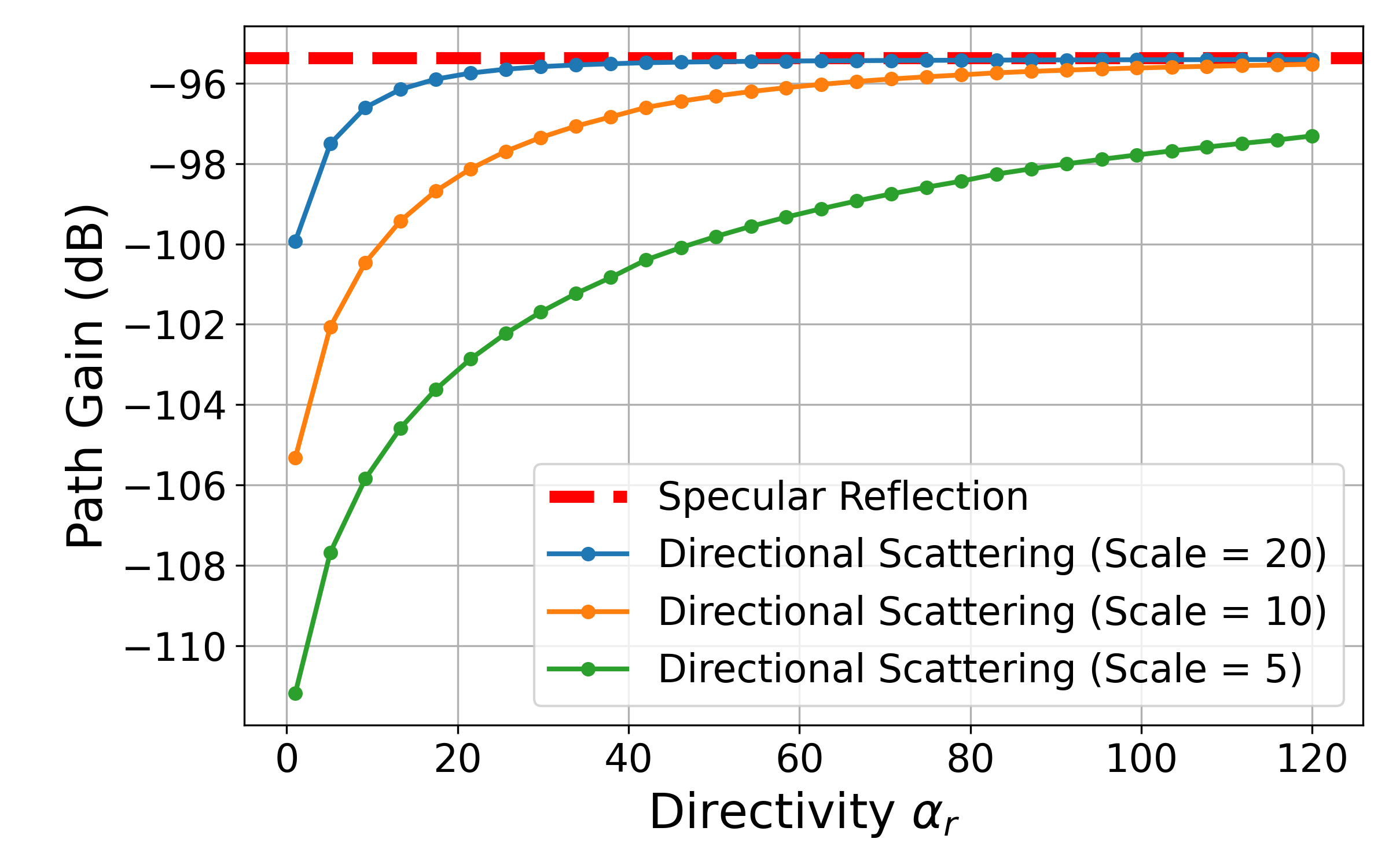}
        \caption{Deviation analysis for different directivities and scales of the reflective surface.}
        \label{fig:error_directivity}
    \end{subfigure}
    \hfill
    \begin{subfigure}[t]{0.32\textwidth}
        \centering
        \includegraphics[width=\textwidth]{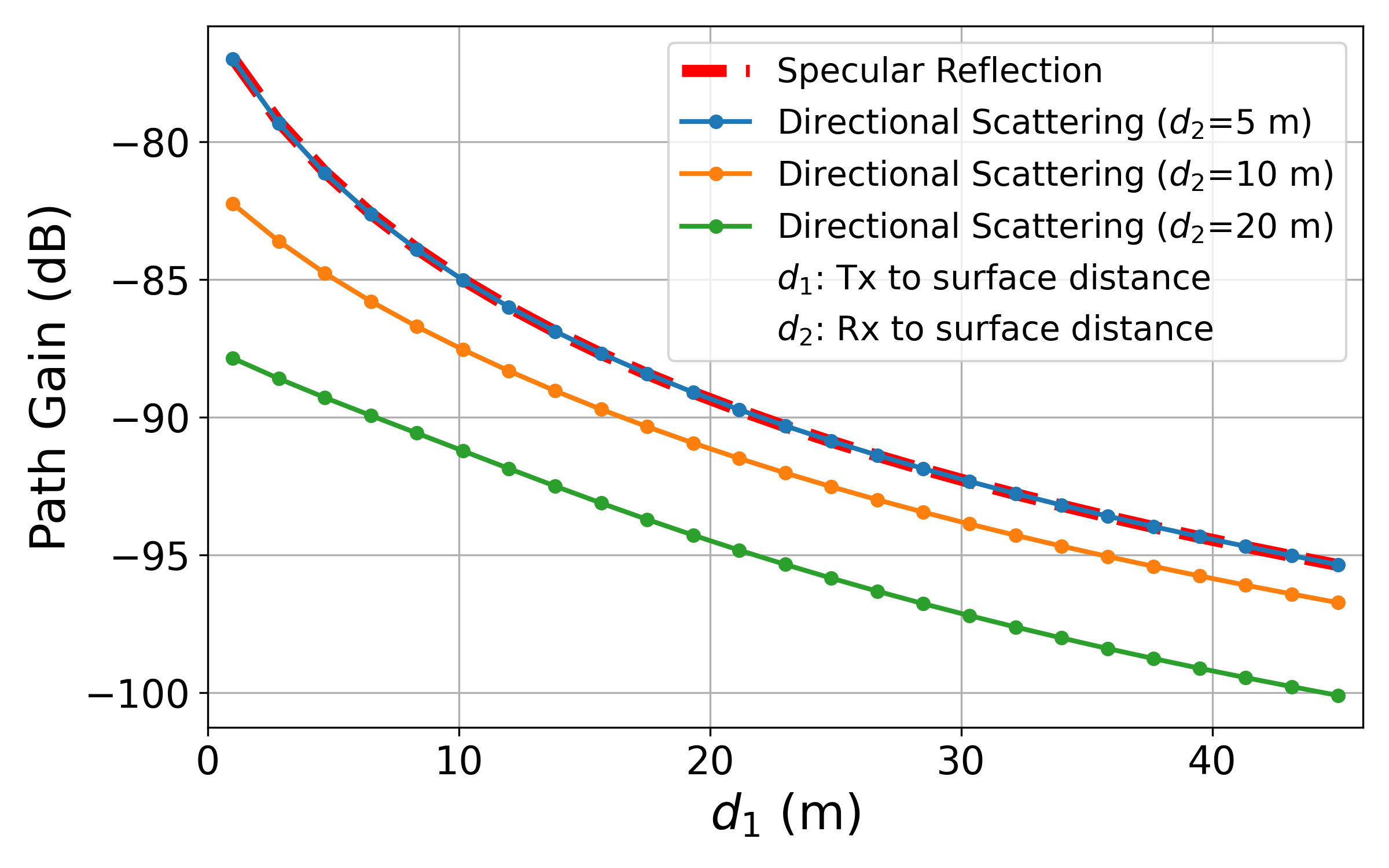}
        \caption{Deviation analysis for distances between TX/RX and reflection point.}
        \label{fig:error_distance}
    \end{subfigure}
    \hfill
    \begin{subfigure}[t]{0.32\textwidth}
        \centering
        \includegraphics[width=\textwidth]{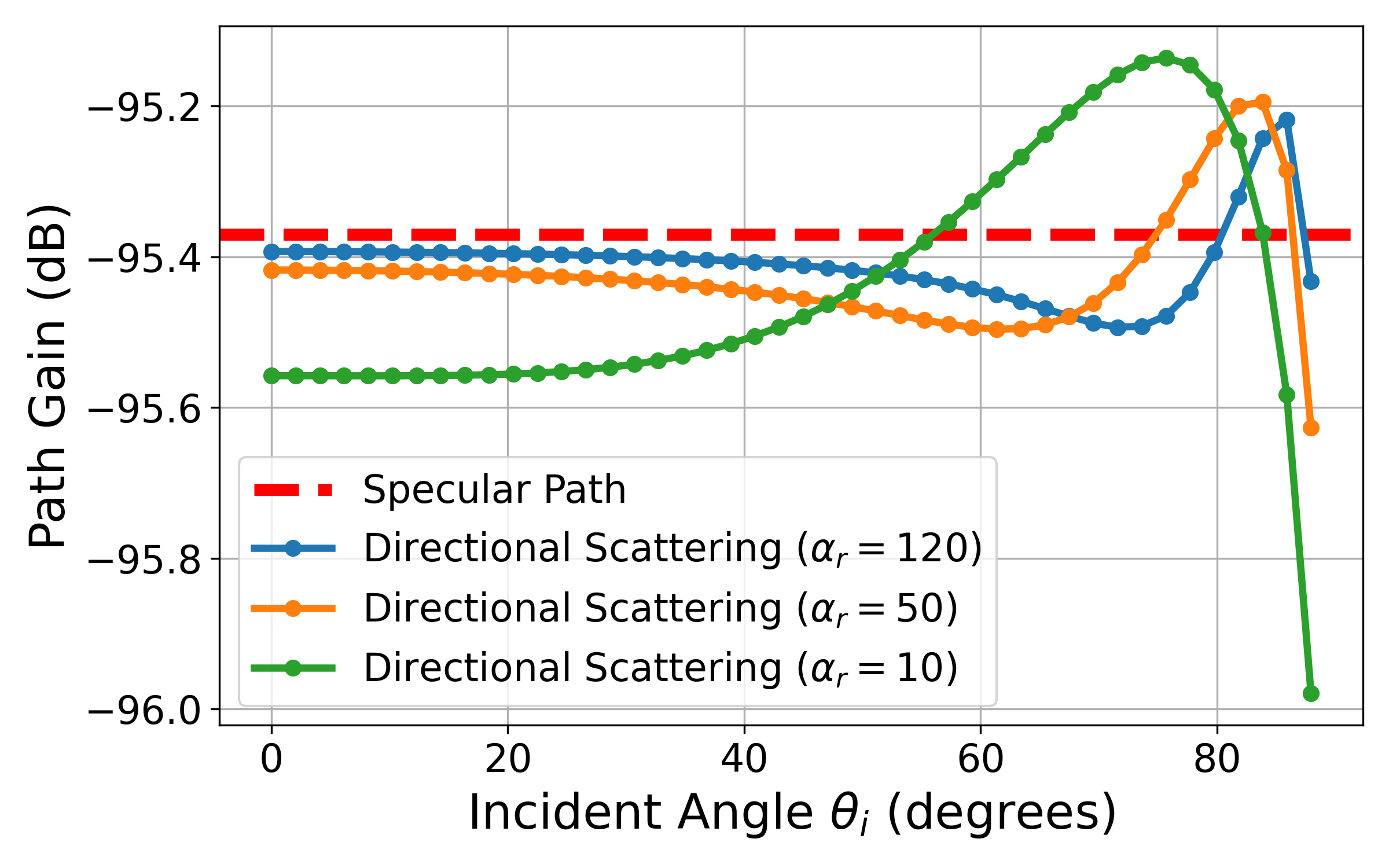}
        \caption{Deviation analysis for different incident angles.}
        \label{fig:error_incident_angle}
    \end{subfigure}
    \caption{Factors that impact the deviation of the directional scattering approximation from the specular reflection model.}
    \label{fig:comparison_error}
\end{figure*}

\begin{figure*}[t]
    \centering

    \begin{minipage}[t]{0.32\textwidth}
        \centering
        \includegraphics[width=\linewidth]{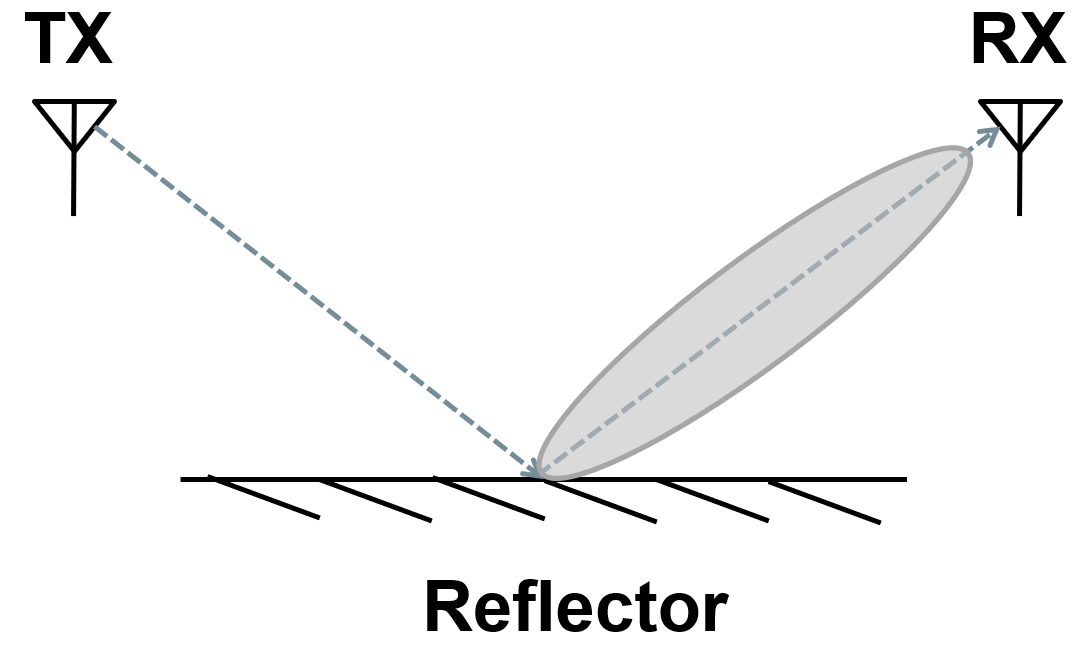}
        \vspace{12pt}
        \caption{Single planar reflector scene for deviation analysis.}
        \label{fig:single_reflector_scene_suppl}
    \end{minipage}
    \hfill
    \begin{minipage}[t]{0.63\textwidth}
        \centering
        \begin{subfigure}[t]{0.48\textwidth}
            \centering
            \includegraphics[width=\linewidth]{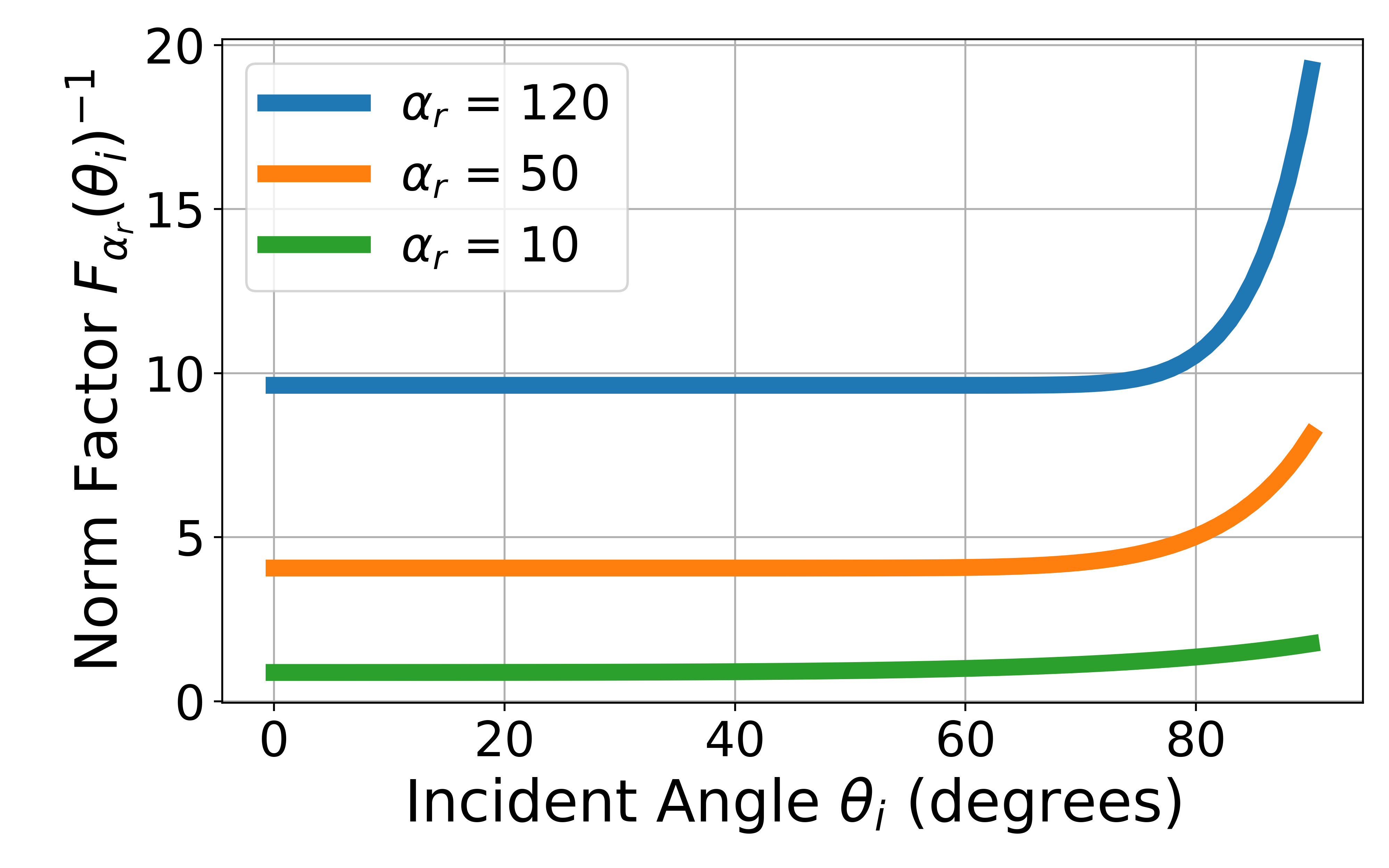}
            \caption{Normalization factor increases with large incident angles.}
            \label{fig:normalization_factor}
        \end{subfigure}
        \hfill
        \begin{subfigure}[t]{0.48\textwidth}
            \centering
            \includegraphics[width=\linewidth]{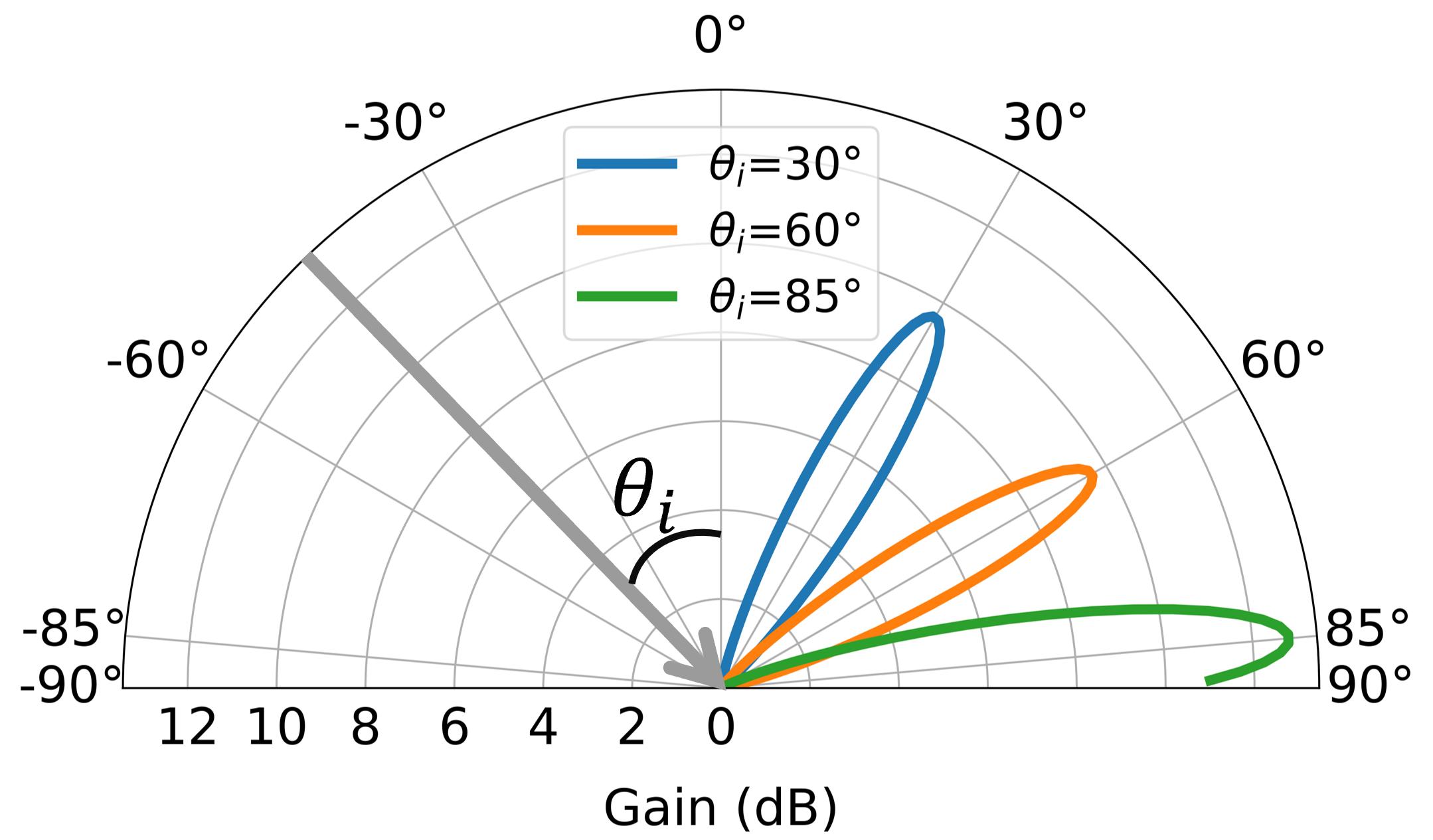}
            \caption{Scattering pattern for different incident angles ($\alpha_r=120$).}
            \label{fig:pattern_incident_angle}
        \end{subfigure}

        \caption{Large incident angles cause a higher normalization factor, which amplifies the scattering lobe.}
        \label{fig:normalization_and_pattern}
    \end{minipage}

\end{figure*}
In the previous section, we have proved that under extreme conditions (i.e., sufficiently large surface and narrow beam), the directional scattering approximation converges to the specular reflection model. In this section, we analyze the deviation introduced by the scattering approximation under non-extreme scenarios and elucidate the source of deviations. Note that we call this "deviation" instead of "error" because the specular reflection model itself is an approximation of the real-world reflection behavior, and is derived under the infinitely large, perfectly flat surface assumption. Here, we treat the specular reflection model as a reference to understand the behavior of \name's approximation under different conditions.

We found that the deviation is mainly caused by four factors: the scale of the reflective surface, the directivity of the scattering lobe, the distance between the surface and the TX/RX, and the incident angle. \cref{fig:error_directivity} shows the deviation versus the directivity of the scattering lobe for different surface scales. We conduct this experiment in the setup shown in ~\cref{fig:single_reflector_scene_suppl}. We fixed the distance between the TX/RX and the reflection point to 25~m and the incident angle to $45^\circ$. By sweeping across different directivities, we observe that the deviation decreases as the directivity increases. This is expected since a higher directivity means a narrower lobe, which better approximates the specular reflection. We also observe that for a given directivity, a larger surface scale results in a smaller deviation. This is because a larger surface can capture more energy, leading to a higher path gain that is closer to the specular reflection. For the scale of 5, the deviation is 2~dB even when the directivity has reached 120, indicating that the surface is not large enough to fully capture the energy. 

\cref{fig:error_distance} shows the deviation versus the distance between the TX/RX and the reflection point. In this experiment, we fixed the surface scale to 5 and directivity to 120, and changed the Tx-to-surface distance, $d_1$, while keeping the Rx-to-surface distance, $d_2$, fixed. The incident angle is also fixed to $45^\circ$. As shown, the deviation increases as the distance increases. This is because, for a fixed surface scale, as the TX/RX moves further away, the effective area of the surface that is illuminated by the TX and visible to the RX decreases, leading to a lower path gain.

The above two experiments reveal that the key factor that affects the deviation is the effective area of the surface that is illuminated by the TX and visible to the RX. When the effective area is large enough to capture all the energy, the directional scattering approximation matches the specular reflection path gains very well. Hence, the directional scattering model can be regarded as a softening of the specular reflection model for finite, non-ideal surface scenarios.

Interestingly, we found that the incident angle also affects the deviation. In this experiment, we fixed the surface scale to 200 (sufficiently large), and the distance between the TX/RX and the reflection point to 25m, then swept the incident angle $\theta_i$ from 0 to $90^\circ$. We move both TX and RX to keep the incident angle and reflection angle the same, such that we can use the specular reflection path gain as the reference. We choose three directivities: 10, 50, and 120, and plot the deviation versus the incident angle in \cref{fig:error_incident_angle}. We observe that for small incident angles (up to $40^\circ$ for $\alpha_r=10$, and $70^\circ$ for $\alpha_r=50 \textrm{~and~} 120$), the deviation is relatively constant with respect to the incident angle. However, as the incident angle increases over $40\circ$, there is an interesting variation pattern of the directional path gain: the deviation first decreases, then increases again after $70^\circ$, and finally decreases again after $80^\circ$. 


\begin{figure}[t]
    \centering
    \begin{minipage}[t]{0.6\textwidth}
        \centering
        \includegraphics[width=\textwidth]{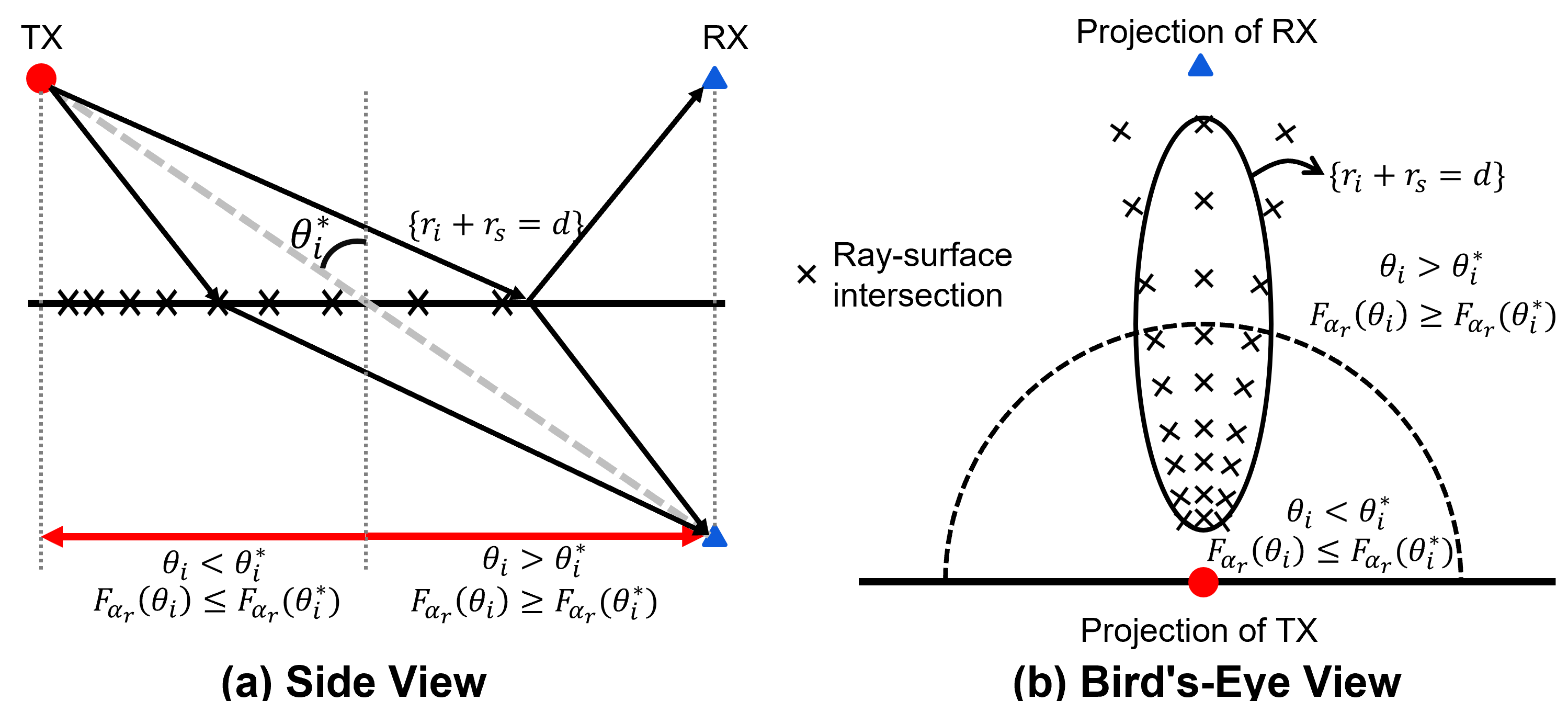}
        \caption{Illustration of the sampling issue that causes the variation with respect to the incident angle (bird's-eye view).}
        \label{fig:incident_angle_deviation_explained}
    \end{minipage}
    \hspace{10pt}
    \begin{minipage}[t]{0.3\textwidth}
        \centering
        \includegraphics[width=\textwidth]{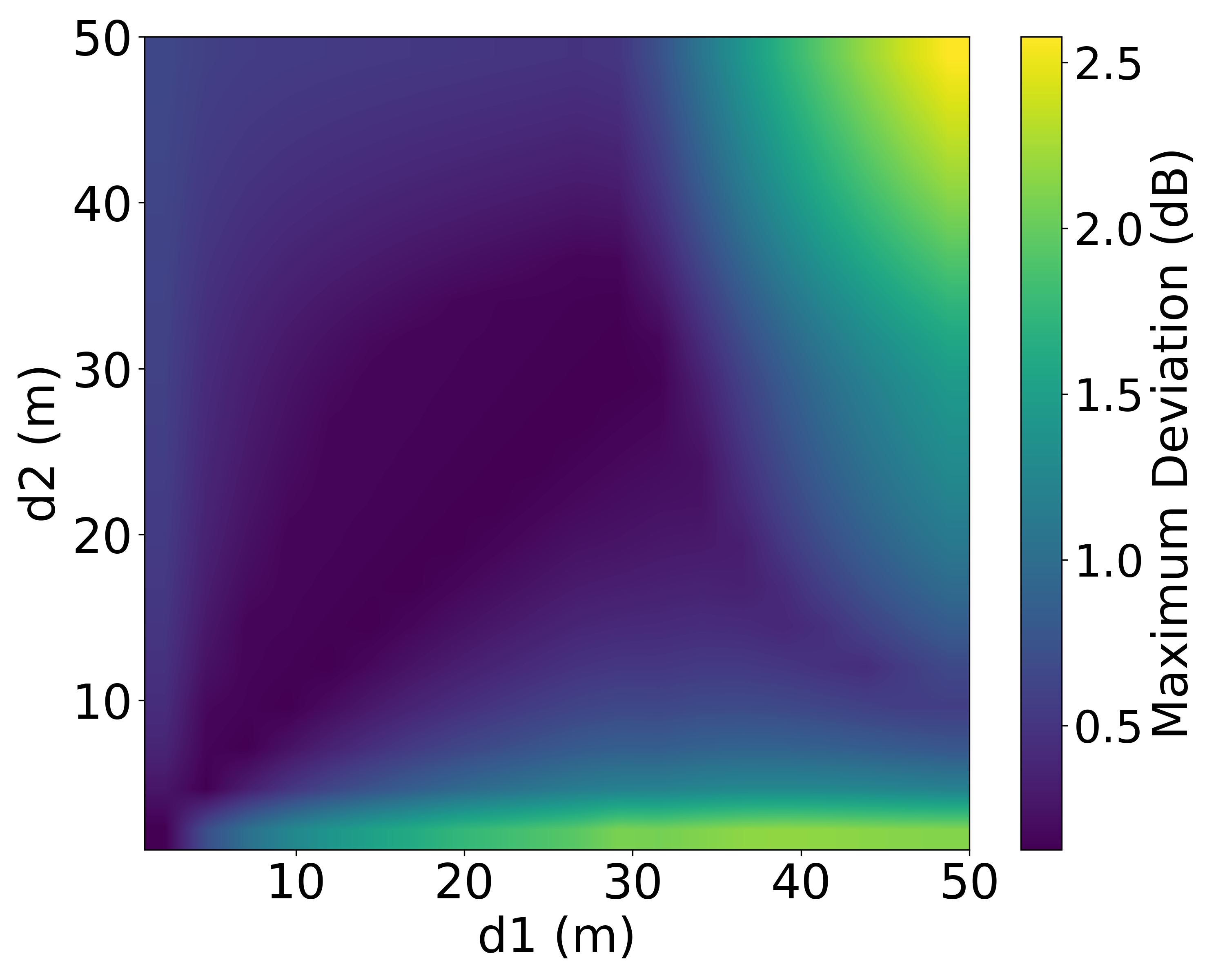}
        \caption{Maximum deviation for different distances between TX/RX and reflection point.}
        \label{fig:path_gain_max_error_heatmap}
    \end{minipage}
\end{figure}


Next, we perform a qualitative analysis to elucidate the reason behind this variation. First, we found that this variation is related to the normalization factor of the scattering lobe. As we discussed in \cref{app:scattering_convergence}, the scattering pattern must satisfy the energy conservation equation (~\cref{eq:energy_cons_scattering}). In our case, we adopt the scattering pattern model in~\cite{degli2007measurement}, which has a normalization factor, $F_{\alpha_r}(\theta_i)^{-1}$, (~\cref{eq:directional_scattering}) to ensure the energy conservation. We plot the normalization factor versus the incident angle in ~\cref{fig:normalization_factor}. We observe that the normalization factors share the same pivot point at around $40^\circ$ for $\alpha_r=10$ and around $70^\circ$ for $\alpha_r=50 \textrm{~and~} 120$. After the pivot point, the normalization factor increases rapidly as the incident angle increases. This means that for large incident angles, the scattering lobe is amplified more (to compensate for the cutting by the scattering plane) to satisfy the energy conservation equation. \cref{fig:pattern_incident_angle} shows the scattering pattern for different incident angles when $\alpha_r=120$. We can see that as the incident angle increases to $85^\circ$, the scattering lobe is cut more by the scattering plane, and the remaining lobe is amplified more to satisfy the energy conservation equation.

Now that we know that the variation is related to the amplification of the scattering lobe for large incident angles, we further investigate why this amplification causes the variation in deviation. According to \cref{eq:int_rx_power_effective_area}, the path gain of a single scattering path is also related to the scattering path length $(r_i + r_s)$. If we plot the level set of the scattering path length on the scattering surface, the contours will be ellipses, as illustrated in \cref{fig:incident_angle_deviation_explained}. We also plot the curve of $\theta_i=\theta_i^*$ on the figure, where $\theta_i^*$ is the incident angle that results in the specular reflection (our reference). The resulting curve is a circle. Within the circle, $\theta_i < \theta_i^*$, and $F_{\alpha_r}(\theta_i) \leq F_{\alpha_r}(\theta_i^*)$. On the contrary, $\theta_i > \theta_i^*$, and $F_{\alpha_r}(\theta_i) \geq F_{\alpha_r}(\theta_i^*)$ outside the circle. The intersection of the circle and the ellipses divides the graph into two parts. The part inside the circle has a smaller area, while the part outside the circle has a larger area. To this end, the sampling density comes into the scope, since the number of samples in each area and their strengths will both affect the final integration result of the path gain. In our implementation, we sample the TX radiation sphere uniformly, which results in a non-uniform sampling on the scattering surface. The closer to the TX's projection, the denser the sampling. Therefore, the part inside the circle has more samples compared to the part outside the circle. This results in two contradictory forces affecting the path gain integration: the part inside the circle has a smaller area, less amplification, but more samples, while the part outside the circle has a larger area, more amplification, but fewer samples. Jointly, these two forces lead to the interesting variation pattern of the deviation with respect to the incident angle.

In summary, the variation is an artifact of the non-uniform sampling on the scattering surface. We identify this as a future research direction for better sampling strategies to fix this artifact. Finally, we investigate the maximum deviation caused by this artifact. In this experiment, we sweep the TX/RX to reflection point distance from 1~m to 50~m, and for each combination, we sweep the incident angle from 0 to $90^\circ$ to find the maximum deviation. We plot the heatmap of maximum deviation versus the distances in ~\cref{fig:path_gain_max_error_heatmap}. We note that for most areas of the plot, the maximum deviation is less than 1~dB, which is good enough for coverage estimation. The maximum deviation increases for larger distances, but it is still bounded by 3~dB even when both distances are 50~m. This indicates that the artifact is not significant in practical scenarios. 

\newpage
\section{Training and Model Details}
\label{app:train_details}
Here, we provide additional details regarding the training procedure and model architecture used in our experiments. We train our \name model using the Adam optimizer with an initial learning rate of $5\times 10^{-4}$. In each scene, we train the model for 100 epochs, which takes approximately 4 hours on a single NVIDIA RTX 3090 GPU. We use a single receiver per batch due to the current architecture limitation. We expect that using multiple receivers per batch can further stabilize and accelerate the training process. 

Our model architecture consists of a 4-layer MLP with 64 hidden units per layer. We use a 6-octave positional encoding to encode the 3D coordinates of the ray-surface interaction points before feeding them into the MLP. The output of the MLP is clipped and transformed using a experiential function to the expected range of material parameters, i.e., the relative permittivity $\epsilon_r \in [1, 200]$ and the conductivity $\sigma \in [10^{-3}, 10^6]$ S/m; the scattering coefficient and cross-polarization discrimination parameters are both in the range of $[0, 1]$. In total, our model has 29,700 learnable parameters. We found that this architecture strikes a good balance between model capacity and training efficiency for our task. 

For the Sionna baseline models, we follow the training procedure in ~\cite{diff-rt-calibration} and train the model for 10000 iterations. We observe that the learning curves have converged at this point. For the NeRF\textsuperscript{2} baseline, we follow the training procedure in ~\cite{NeRF2} and train the model for 30,000 iterations to converge. 

\newpage
\section{Dataset Details}
\label{app:dataset_details}
\begin{figure}[h]
    \centering
    \includegraphics[width=0.7\textwidth]{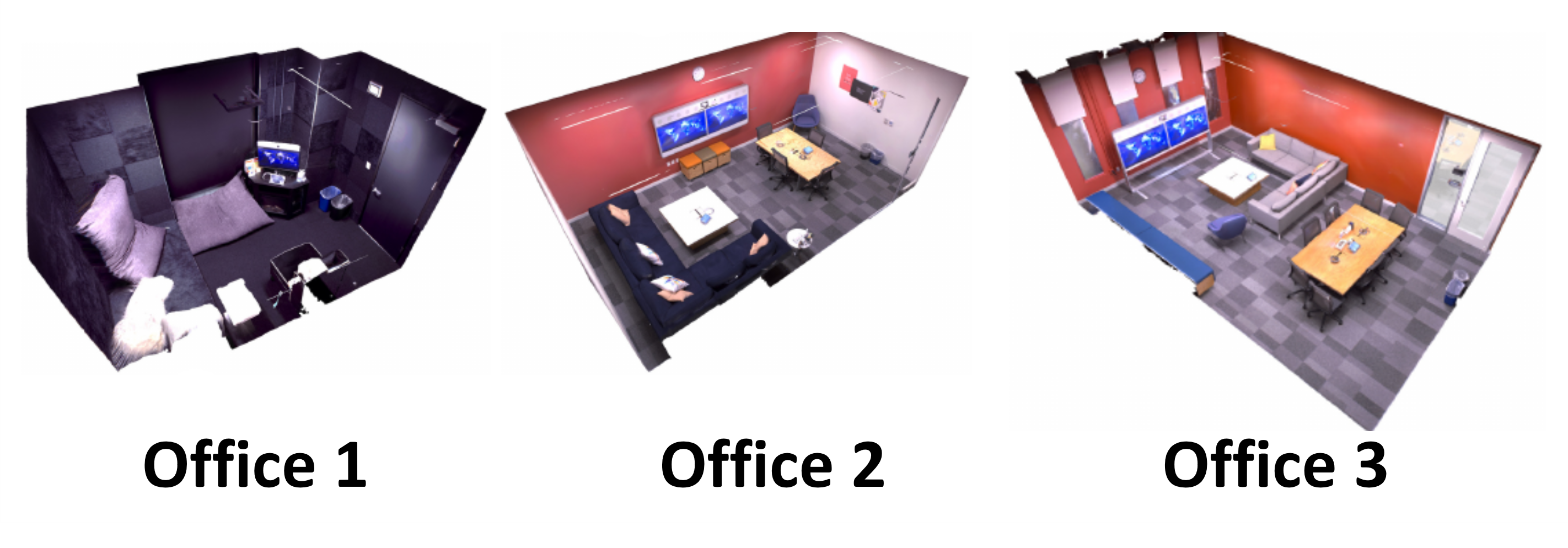}
    \caption{Additional Replica scenes used for evaluation: Office 1, Office 2, and Office 3.}
    \label{fig:add_replica_scenes}
\end{figure}
\begin{figure}[h]
    \centering
    \includegraphics[width=\linewidth]{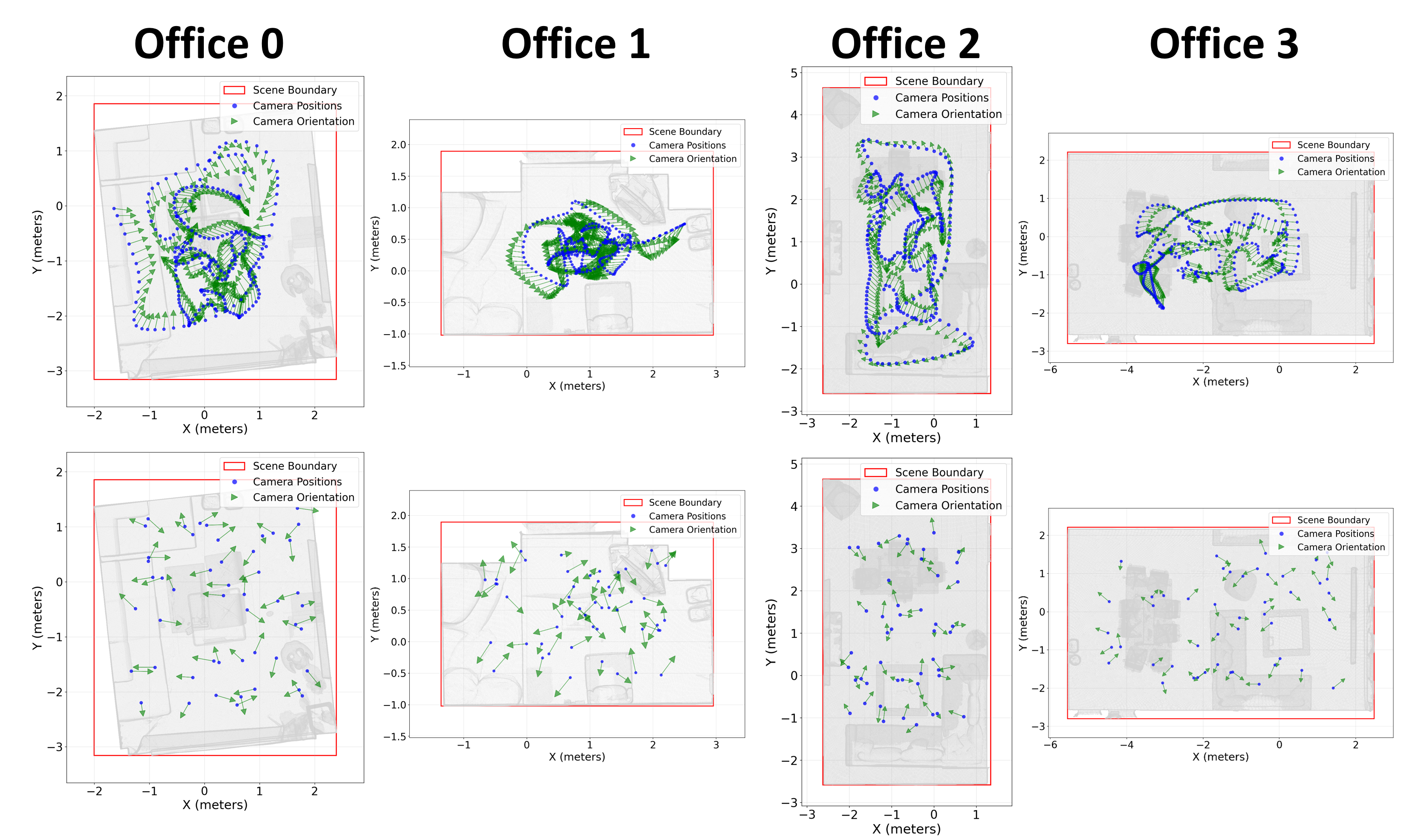}
    \vspace{6pt}
    \includegraphics[width=\linewidth]{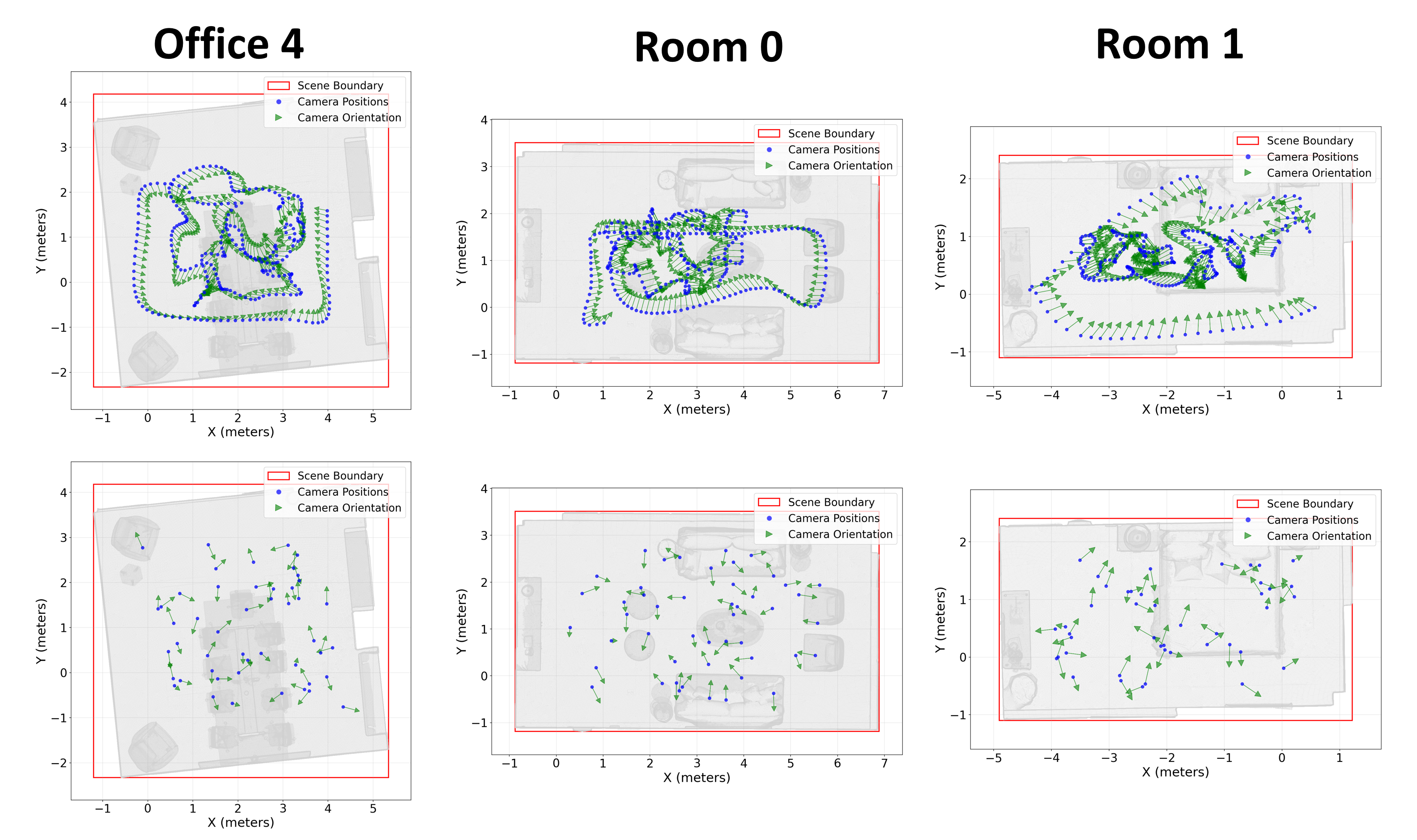}
    \caption{Sample poses in the TRAJ and RAND datasets (subsampled by 10x for visualization). Top: TRAJ poses follow a camera trajectory around the scene, used for training. Bottom: RAND poses are randomly distributed around the scene, used for testing. Note this is the 2D projection for visualization, the actual positions and orientations are in 3D.}
    \label{fig:dset_traj}
\end{figure}

Here, we present three additional Replica scenes, making the total dataset cover seven realistic indoor environments. The additional scenes, Office 1, Office 2, and Office 3, are illustrated in ~\cref{fig:add_replica_scenes}. We follow the same dataset generation and experimental procedures as described in the main paper. The camera/receiver poses for the TRAJ and RAND datasets are visualized in ~\cref{fig:dset_traj}. For the sake of clarity, we down-sample the poses by 10x in the visualization. In each scene, the TRAJ dataset contains 2,800 training samples, and the RAND dataset contains around 500 test samples (we manually removed some out-of-scene poses). Detailed scene information is summarized in ~\cref{tab:scene_info}. We note that this dataset is built on top of the open-source Replica dataset~\cite{replica19arxiv}, with additional radio material assignments, trajectory sampling, and wireless channel simulations. We will make this dataset publicly available to facilitate future research.

\begin{table}[t]
\centering
\caption{Scene Statistics}
\setlength{\tabcolsep}{6pt}
\renewcommand{\arraystretch}{1.15}
\begin{tabular}{l c c c c c}
\hline
Scene Name & Dimensions (m) & \# Objects & \# Materials & \# Test Samples & Training Sample Density (\#samples/m$^3$) \\
\hline
Office 0 & 4.4 $\times$ 5.0 $\times$ 3.0   & 68  & 10 & 489 & 42.4\\
Office 1 & 4.3 $\times$ 2.9 $\times$ 2.7  & 52  & 8  & 500 & 81.1\\
Office 2 & 4.0 $\times$ 7.2 $\times$ 2.8  & 94  & 8  & 468 & 35.3\\
Office 3 & 8.0 $\times$ 5.0 $\times$ 3.1 & 113 & 7  & 476  & 22.6\\
Office 4 & 6.5 $\times$ 6.5 $\times$ 2.8  & 71  & 7  & 500 & 23.3\\
Room 0   & 7.8 $\times$ 4.7 $\times$ 2.8  & 94  & 8  & 494 & 27.3\\
Room 1   & 6.1 $\times$ 3.5 $\times$ 2.8  & 57  & 8  & 474 & 47.4\\
\hline
\end{tabular}
\label{tab:scene_info}
\end{table}

\newpage
\section{Geometric Differentiability}
\label{app:geo_diff}

\begin{figure}[t!]
    \centering
    \begin{subfigure}{0.47\columnwidth}
        \centering
        \includegraphics[width=\textwidth]{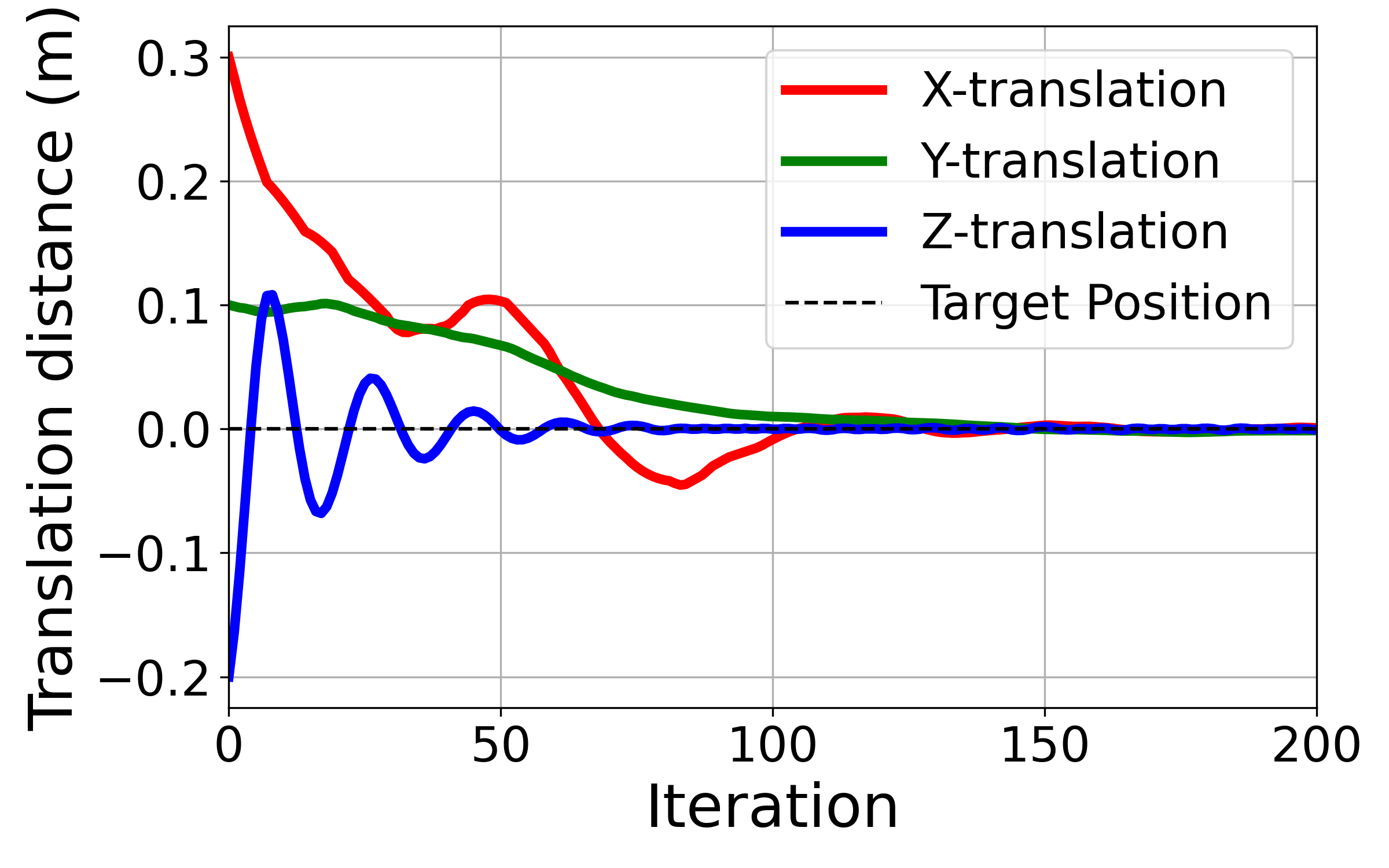}
        \caption{Calibration of translational offsets.}
        \label{fig:calib_trans}
        \vspace{-10pt}
    \end{subfigure}
    \begin{subfigure}{0.47\columnwidth}
        \centering
        \includegraphics[width=\textwidth]{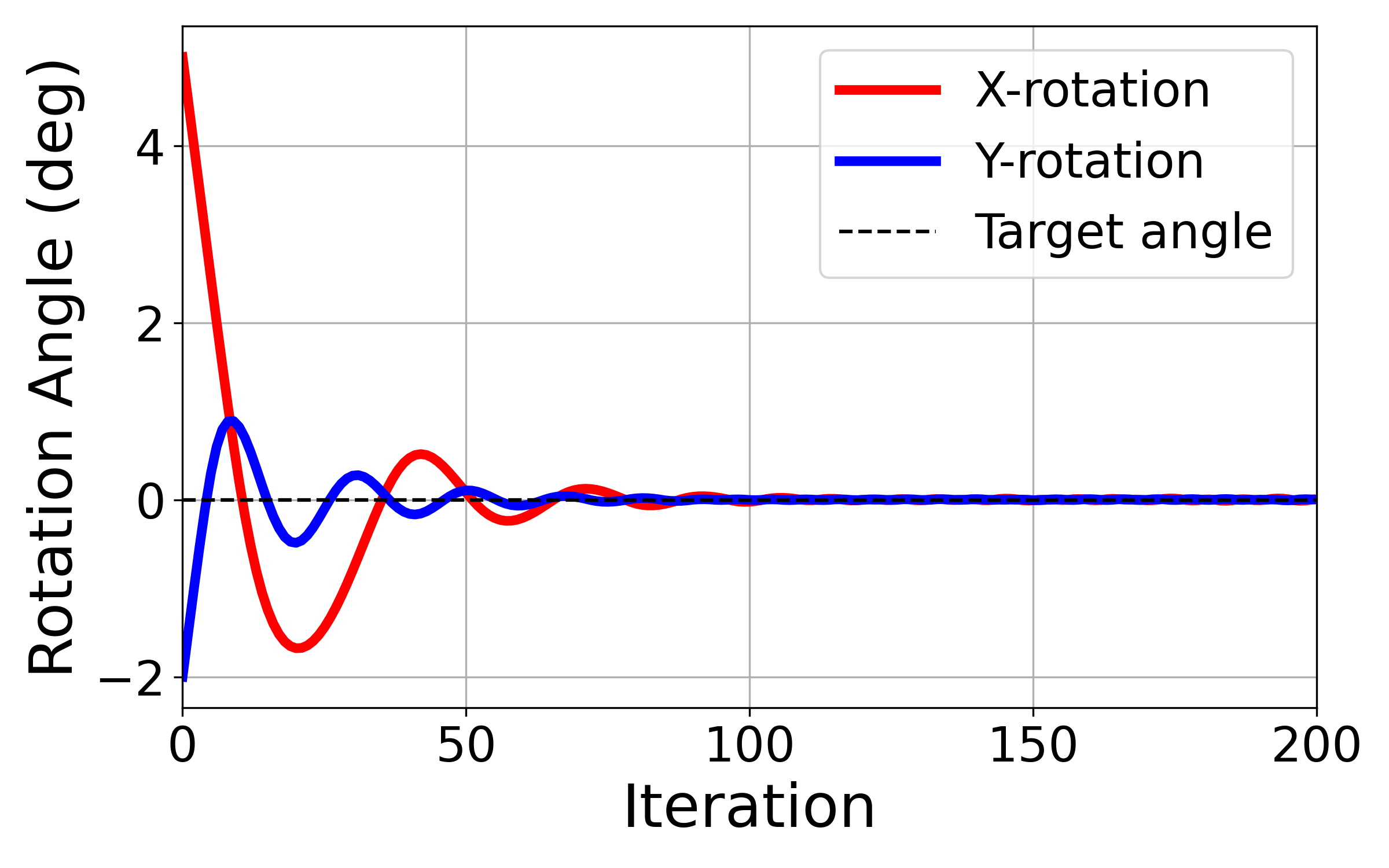}
        \caption{Calibration of rotational offsets.}
        \label{fig:calib_rot}
        \vspace{-10pt}
    \end{subfigure}
    \hfill
    \caption{\name's differentiability enables geometry calibration.}
    \label{fig:calib}
    \vspace{-10pt}
\end{figure}

We demonstrate the differentiability of \name by showing its capability to calibrate simple geometry offsets. We adopt the single-reflector scene shown in \cref{fig:single_reflector_scene}, where both TX and RX arrays are placed symmetrically on the same side of a 1x1~m\textsuperscript{2} planar reflector. We first record the reference AoA power spectrum when the reflector is centered at the origin with $0^\circ$ rotation. We then translate/rotate the reflector by a small distance/angle. And run \name's simulator again to obtain the AoA power spectrum in the distorted scene. Over the course of training, we minimize the log-MAE of AoA power spectrum using Adam optimizer with a learning rate of 0.02. For each experiment, we make the changed geometry variable (translation distance/rotation angle) the sole optimization parameter.  

\cref{fig:calib} shows the calibration of geometry offset over the course of training for 200 iterations. \cref{fig:calib_trans} shows the calibration of offsets along the X/Y/Z axis. As expected, the translation offsets gradually converge to 0 m, correctly recovering the original reflector position. Similarly, we test the calibration of rotational offset along the X/Y axis. As shown in \cref{fig:calib_rot}, in both cases, the calibrated angles converge back to the ground-truth -- 0$^\circ$. We note that rotation around the Z axis (surface normal) has a negligible effect on the AoA power spectrum, leading to vanishing gradients; thus, calibration along this axis is ineffective. 

This experiment shows that \name is differentiable with respect to scene geometry and can successfully correct geometric offsets in a simple scene. However, calibrating large-scale complex scene meshes remains highly challenging, and we deem this as an interesting future direction. 

\newpage
\section{Additional Results}
\label{app:add_results}

Here, we provide the evaluation results on additional scenes to further validate the effectiveness of our proposed \name model. 
Table~\ref{tab:traj_additional} shows the robustness evaluation under geometry noise on three additional scenes. As is shown, our \name model consistently outperforms the Sionna baseline across all metrics, preserving almost perfect prediction accuracy even in the presence of geometry noise. The Average Angular Error (AAE) and Average Power Error (APE) are also consistently lower for \name, demonstrating its robustness to geometry noise.

\begin{table}[h]
\centering
\caption{Robustness under Geometry Noise for Additional Scenes (TRAJ Dataset).}
\setlength{\tabcolsep}{5pt}
\renewcommand{\arraystretch}{1.1}
\begin{tabular}{ll|ccc|ccc}
\hline
Scene & Methods 
& \multicolumn{3}{c|}{Top-1 Peak} 
& \multicolumn{3}{c}{Top-5 Peaks} \\ 
& & F1$\uparrow$ & AAE$\downarrow$ & APE$\downarrow$
  & F1$\uparrow$ & AAE$\downarrow$ & APE$\downarrow$ \\
\hline
\multirow{2}{*}{Office 1}
 & Sionna  & 0.9339 & 2.43 & 0.72 & 0.8576 & 2.23 & 1.14 \\ 
 & \textbf{mmDiff} & \textbf{0.9893} & \textbf{0.36} & \textbf{0.16}
          & \textbf{0.9897} & \textbf{0.33} & \textbf{0.19} \\ 
\hline
\multirow{2}{*}{Office 2}
 & Sionna  & 0.8407 & 2.79 & 0.75 & 0.8120 & 2.48 & 1.30 \\ 
 & \textbf{mmDiff} & \textbf{0.9989} & \textbf{0.18} & \textbf{0.02}
          & \textbf{0.9924} & \textbf{0.19} & \textbf{0.04} \\ 
\hline
\multirow{2}{*}{Office 3}
 & Sionna  & 0.6743 & 4.38 & 1.77 & 0.6570 & 3.77 & 2.15 \\ 
 & \textbf{mmDiff} & \textbf{0.9975} & \textbf{0.25} & \textbf{0.02}
          & \textbf{0.9874} & \textbf{0.29} & \textbf{0.03} \\ 
\hline
\end{tabular}
\label{tab:traj_additional}
\end{table}

\begin{table}[h]
\centering
\caption{Prediction Accuracy Results for Additional Scenes (RAND Dataset).}
\setlength{\tabcolsep}{5pt}
\renewcommand{\arraystretch}{1.1}
\begin{tabular}{ll|ccc|ccc}
\hline
Scene & Methods
& \multicolumn{3}{c|}{Top-1 Peak}
& \multicolumn{3}{c}{Top-5 Peaks} \\
& & F1$\uparrow$ & AAE$\downarrow$ & APE$\downarrow$
  & F1$\uparrow$ & AAE$\downarrow$ & APE$\downarrow$ \\
\hline
\multirow{4}{*}{Office 1}
 & Sionna  & 0.8756 & 2.79 & 0.92 & 0.8222 & 2.57 & 1.57 \\
 & NeRF\textsuperscript{2}  & 0.1433 & 12.45 & 7.95 & 0.4662 & 12.62 & 6.45 \\
 & \name w/o Calib. & 0.7892 & 4.87 & 3.22 & 0.7498 & 3.95 & 3.45 \\
 & \textbf{\name Full} & \textbf{0.9297} & \textbf{1.72} & \textbf{0.80}
          & \textbf{0.9145} & \textbf{1.70} & \textbf{1.02} \\
\hline
\multirow{4}{*}{Office 2}
 & Sionna  & 0.8219 & 2.93 & 0.91 & 0.7845 & 2.57 & 1.74 \\
 & NeRF\textsuperscript{2}  & 0.1496 & 11.36 & 3.80 & 0.2338 & 9.59 & 4.64 \\
 & \name w/o Calib. & 0.7906 & 3.55 & 1.08 & 0.7829 & 3.11 & 1.99 \\
 & \textbf{\name Full} & \textbf{0.9722} & \textbf{1.04} & \textbf{0.27}
          & \textbf{0.9534} & \textbf{0.99} & \textbf{0.47} \\
\hline
\multirow{4}{*}{Office 3}
 & Sionna  & 0.7882 & 4.92 & 4.80 & 0.7376 & 3.61 & 4.83 \\
 & NeRF\textsuperscript{2}  & 0.0273 & 14.09 & 1.38 & 0.0450 & 13.01 & 3.40 \\
 & \name w/o Calib. & 0.8655 & 3.25 & 1.37 & 0.8210 & 2.85 & 1.93 \\
 & \textbf{\name Full} & \textbf{0.9685} & \textbf{1.09} & \textbf{0.32}
          & \textbf{0.9437} & \textbf{0.96} & \textbf{0.50} \\
\hline
\end{tabular}
\label{tab:rand_additional}
\end{table}
Table~\ref{tab:rand_additional} presents the channel prediction accuracy results on additional scenes on the RAND poses dataset. Note that in this experiment, we train all the models on the TRAJ dataset and evaluate them on the RAND dataset to assess their generalization capabilities to unseen poses. We also include an ablation study of our \name model without the calibration, in which we use a randomly initialized neural network for predicting the material parameters. 

The results indicate that our full \name model significantly outperforms both the Sionna and NeRF\textsuperscript{2} baselines and the ablated version without calibration across all metrics. The failure of Sionna is mainly due to its inconsistency in the tracing paths, i.e., the paths traced in the ground truth scene are different from those traced in the reconstructed scene. This inconsistency causes difficulties for the model to learn the material parameters at the ray-surface interactions, leading to poor generalization to unseen poses. NeRF\textsuperscript{2} performs poorly due to the sparsity of training samples. Basically, as presented in~\cite{NeRF2,newrf}, NeRF\textsuperscript{2} requires extremely dense sampling (around 6,000 samples/m$^3$) of the scene to effectively learn the radiation field representation. In our datasets, the sampling density is around 40 samples/m$^3$, which is far below the training condition of NeRF\textsuperscript{2}, leading to its failure in learning a good representation of the scene. The ablated version of \name without calibration also shows inferior performance compared to the full model, highlighting the importance of the calibration step in accurately estimating material parameters for better generalization.

We also provide visual comparisons of the predicted AoA power spectrum at sample poses in ~\cref{fig:viz_comp} to qualitatively demonstrate the superior performance of our \name model in accurately predicting the mmWave channel characteristics compared to the baselines.

\begin{figure}[h]
    \centering
    \includegraphics[width=\textwidth]{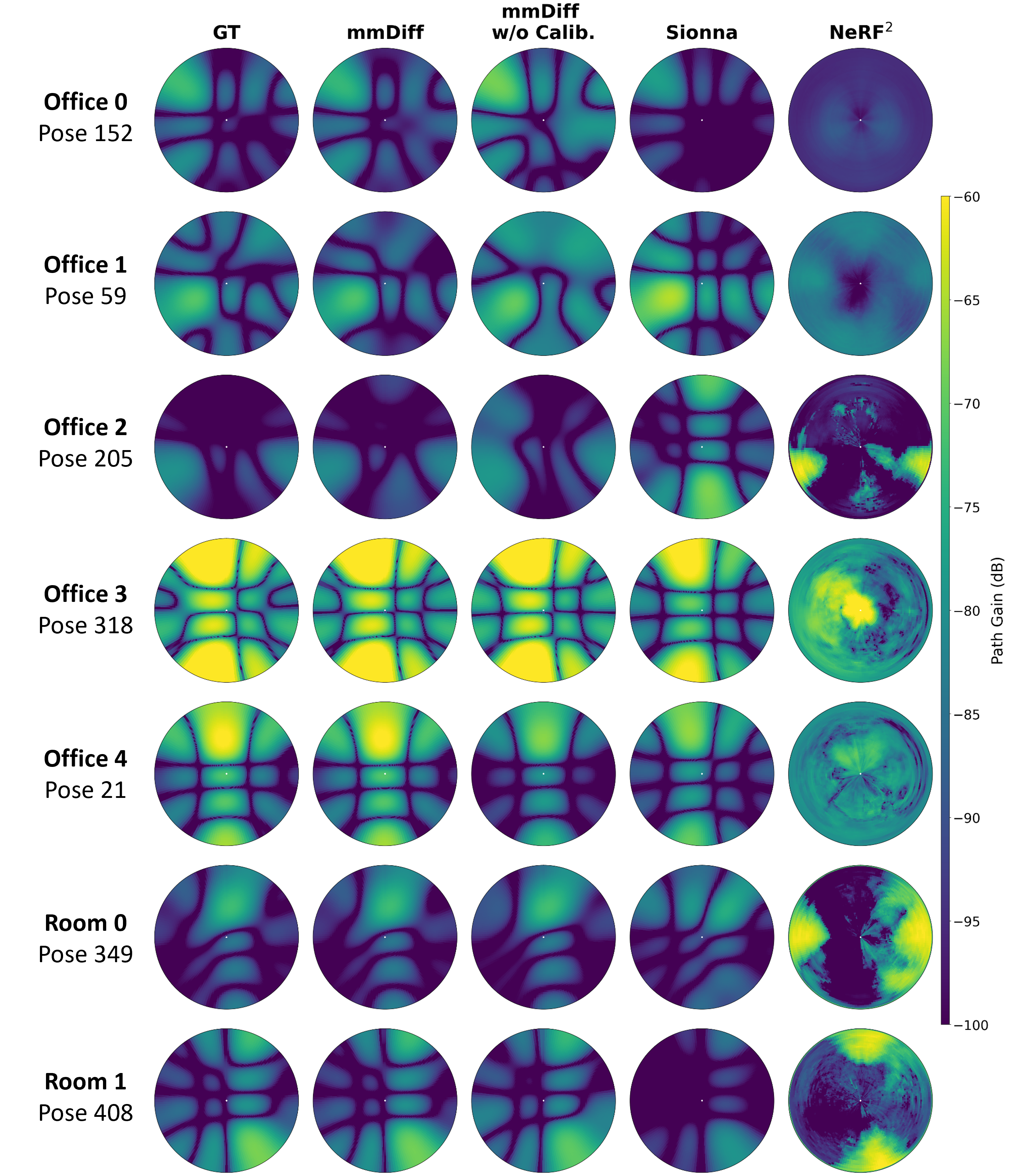}
    \caption{Prediction of AoA power spectrum at sample poses for visual comparison.}
    \label{fig:viz_comp}
\end{figure}

\newpage
\section{Multipath Interference}
\label{app:multipath}
\name adopts a Monte-Carlo-based algorithm to integrate the power of signals arriving from different directions. However, it discards the phase of rays at the ray-surface interaction point and accumulates the real-valued power of the ray weighted by the relative phases across the antenna array.

This is not a problem when there is a single dominant path -- the AoA and power estimation remain correct. This is usually the case in mmWave due to the narrow beam. Even when there is interference from multiple paths, for instance, from a sidelobe, it can be suppressed using the windowing method.

As illustrated in~\cref{fig:multipath_illus}, we consider the single planar reflector case, where both TX and RX are equipped with 8x8 URA, both facing the reflector. There are two paths: the LoS and the reflection path, each with different amplitude and phase. \cref{fig:multipath} shows the AoA spectra produced by the coherent sum of two paths versus the incoherent sum produced by \name. We apply a Hanning window to both TX and RX precoding vectors to reduce the side-lobe level.  Note there is a minor difference between the two spectra, highlighted via yellow boxes. This is due to the inconsistent phase between the two ways of calculating the channel. 

\begin{figure}[h]
    \centering
    \begin{minipage}{0.4\textwidth}
        \centering
        \includegraphics[width=\linewidth]{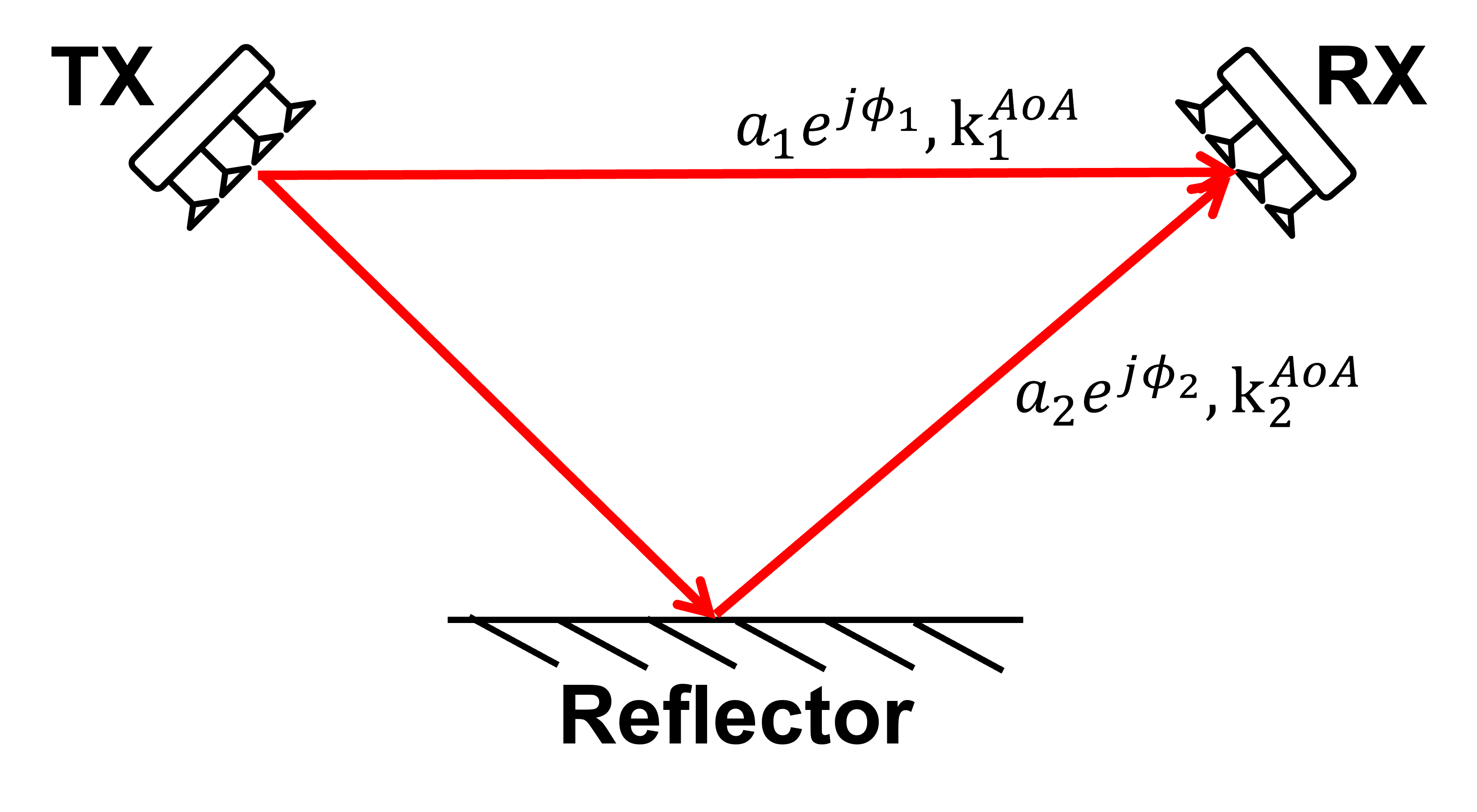}
        \caption{Illustration of the multipath interference}
        \label{fig:multipath_illus}
    \end{minipage}
    \hfill
    \begin{minipage}{0.55\textwidth}
        \centering
        \includegraphics[width=\linewidth]{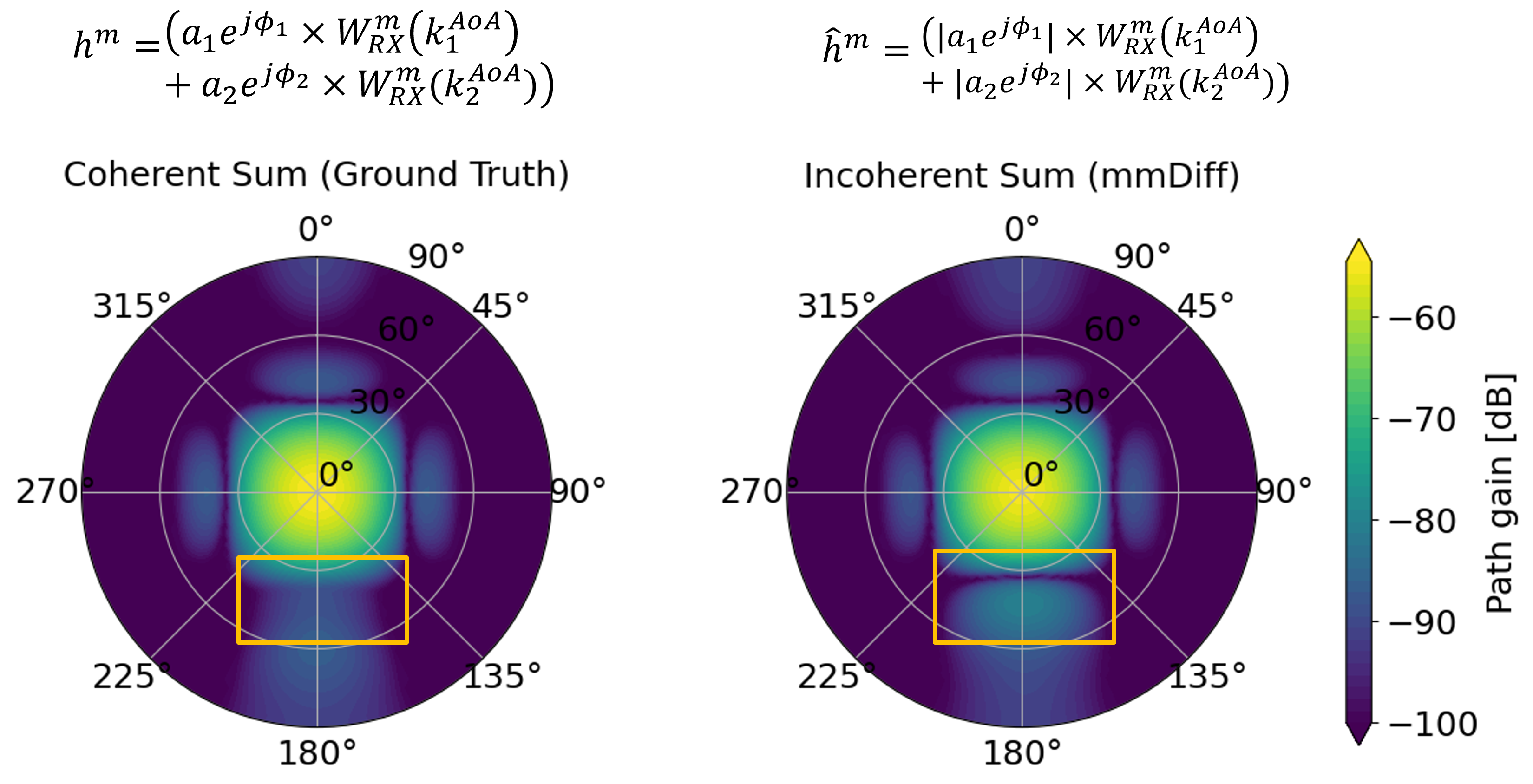}
        \caption{Multipath can cause inconsistent AoA spectra due to the loss of phase information}
        \label{fig:multipath}
    \end{minipage}
\end{figure}

Despite the inconsistency, the calibration works fine. As shown in~\cref{fig:calib_domin_path}, since the reflection path (at 0$^\circ$) is much stronger than the LoS path (at 45$^\circ$), the loss is mostly contributed by the dominant path. By calibrating against the ground truth reference map, we can recover the path gain. 
\begin{figure}[h]
    \centering
    \includegraphics[width=0.8\linewidth]{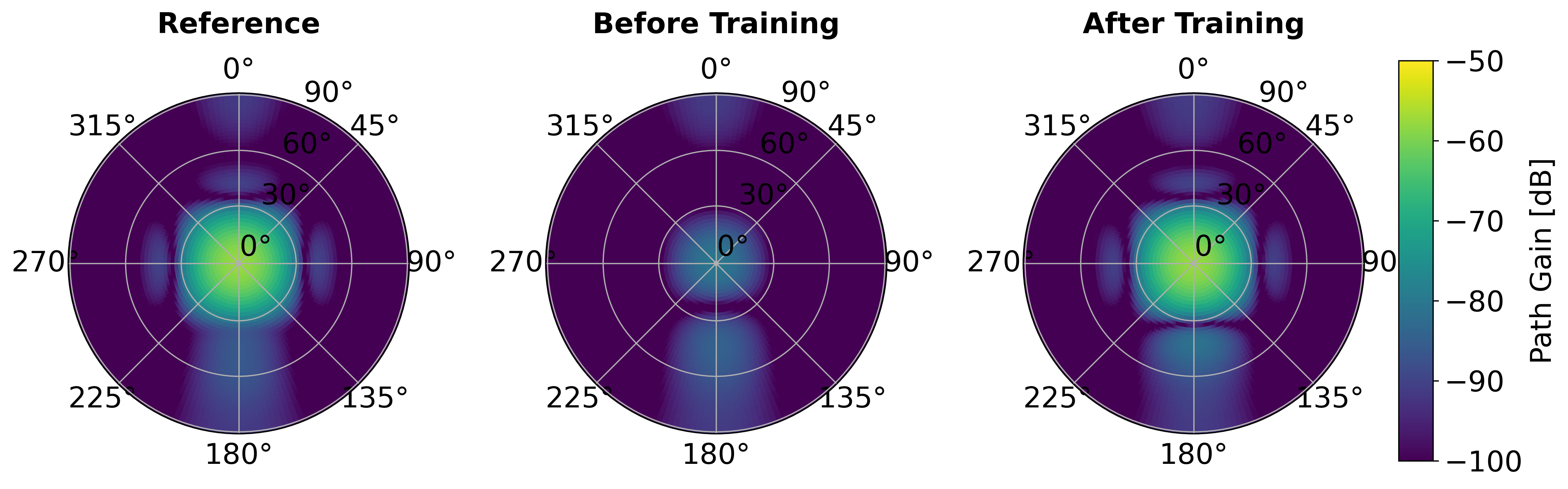}
    \caption{Material calibration with single dominant path.}
    \label{fig:calib_domin_path}
\end{figure}

\end{document}